\newtheorem{theorem}{Theorem}[section]
\newtheorem{proposition}[theorem]{Proposition}
\newtheorem{remark}{Remark}[section]
\title{\LARGE \bf
Rollout-Based Charging Strategy for Electric Trucks\\ 
\vspace{1.5pt}

with Hours-of-Service Regulations (Extended Version)}
\author{Ting Bai,~Yuchao Li,~Karl H. Johansson,~\IEEEmembership{Fellow, IEEE}, and Jonas M{\aa}rtensson
\thanks{This work is supported by the Swedish Research Council Distinguished Professor (Grant Number: 2017-01078), the Knut and Alice Wallenberg Foundation, and the Swedish Strategic Research Foundation CLAS (Grant Number: RIT17-0046).}
\thanks{The authors are with the Integrated Transport Research Lab and Division of Decision and Control Systems, KTH Royal Institute of Technology, SE-100 44 Stockholm, Sweden. They are also affiliated with Digital Futures. E-mails: \{{\tt\small tingbai, yuchao, kallej, jonas1\}@kth.se}}}
\begin{document}

\maketitle
\thispagestyle{empty}
\pagestyle{empty}

\begin{abstract}
Freight drivers of electric trucks need to design charging strategies for where and how long to recharge the truck in order to complete delivery missions on time. Moreover, the charging strategies should be aligned with drivers' driving and rest time regulations, known as hours-of-service (HoS) regulations. This letter studies the optimal charging problems of electric trucks with delivery deadlines under HoS constraints. We assume that a collection of charging and rest stations is given along a pre-planned route with known detours and that the problem data are deterministic. The goal is to minimize the total cost associated with the charging and rest decisions during the entire trip. This problem is formulated as a mixed integer program with bilinear constraints, resulting in a high computational load when applying exact solution approaches. To obtain real-time solutions, we develop a rollout-based approximate scheme, which scales linearly with the number of stations while offering solid performance guarantees. We perform simulation studies over the Swedish road network based on realistic truck data. The results show that our rollout-based approach provides near-optimal solutions to the problem in various conditions while cutting the computational time drastically. 
\end{abstract}

\section{Introduction}\label{Section I}
Vehicle electrification is becoming mainstream globally to reduce carbon emissions and achieve sustainable transportation~\cite{GlobalEVSales}. In particular, road freight electrification is crucial for reducing greenhouse gas emissions caused by diesel-powered trucks in freight operations, which are responsible for around $25\%$ of vehicle-related carbon emissions in Europe~\cite{siskos2019assessing}. However, the process of freight truck electrification today is lagging far behind that of electric passenger vehicles~\cite{gloableev}. A major concern with electrifying trucks, among others, is their limited driving ranges, known as \emph{range anxiety}. Currently, the average travel range of a commercial electric truck on a full battery varies between $200$ and $600$ kilometers, depending on diverse truckloads and battery capacities~\cite{wassiliadis2021review}. This is typically insufficient to sustain trucks to complete their delivery missions without stopping and refilling batteries, especially for long-haul journeys. To diminish range anxiety, increase electric truck adoption, and accelerate road freight electrification, reliable and efficient charging strategies are needed. In addition to charging batteries, truck drivers also need to stop and take rests during trips to avoid driving fatigue. The so-called hours-of-service (HoS) regulations~\cite{poliak2018social} address exactly this issue and put restrictions on how long one can drive consecutively without rest, as well as during one day. As a result, charging strategies for electric trucks should be designed not only for mission completion but also to align with the HoS regulations.

To date, there have been extensive works developing viable charging strategies. A majority of these approaches integrate charging stops into conventional routing problems and minimize the travel time or energy required on the route, as in \cite{storandt2012quick,schneider2014electric,huber2020optimization}. The authors in \cite{cussigh2019optimal} propose an optimal driving and charging strategy for electric vehicles. It includes the driving speed as an additional control variable when minimizing the total travel time. However, none of these works incorporates the HoS regulations in optimal charging problems. To the best of our knowledge, \cite{zahringer2022time} is the first to incorporate today's HoS regulations in their charging strategy, which is obtained via a genetic algorithm. Our work differs from the approach in \cite{zahringer2022time} in two major ways. Firstly, we model the route and optimal charging problem in a more general framework, allowing for multiple rests within the maximum daily driving time before delivery deadlines. Secondly, we develop an online solution scheme that allows real-time optimization to deal with travel time uncertainties or model mismatches, as opposed to the genetic algorithm, which is offline and time-consuming.

To enable a real-time solution, we rely on the idea of \emph{rollout}, which refers to the process of simulating a known solution. Proposed first in \cite{tesauro1996line} for addressing backgammon, rollout has since been extended for combinatorial optimization \cite{bertsekas1997rollout} and trajectory-constrained problems \cite{bertsekas2005rollout}, to name a few. Our scheme is modified from the methods introduced in \cite{bertsekas2005rollout} and \cite[Section~3.4]{bertsekas2020rollout}, where an improved solution is computed based on a known solution. As shown in \cite{bertsekas2020rollout}, the rollout scheme can be viewed as one step of Newton's method applied to solve the optimization problem with the initial guess supplied by the known solution. In view of the fast convergence rate of Newton's method, the solution provided by rollout has substantial improvement from the known solution, which is consistent with many empirical studies \cite[p.~136]{bertsekas2020rollout}.

In this letter, we study the optimal charging strategy for electric trucks with realistic HoS regulations. In particular, we consider an electrified transportation system where every electric truck has a pre-planned route for a delivery task. With the knowledge of a collection of charging and rest stations available along the route, the truck driver could design an optimal charging strategy to determine where and how long to recharge the truck and take rests so that the extra operational costs due to the charging and rest decisions are minimized. The main contributions of this letter are: \emph{i)} we model the optimal charging problems of electric trucks as mixed integer programs with \emph{bilinear constraints} that incorporate the delivery deadlines and HoS regulations; \emph{ii)} a rollout-based approximate solution method, as well as its variants, is developed for addressing the problem, through which the computational demands required by exact solution approaches are significantly decreased while offering solid performance guarantees. Simulation studies performed over the Swedish road network using realistic truck data illustrate the effectiveness of the developed method.

\section{Problem Formulation}\label{Section II}
\subsection{Route Model}
We consider simplified route models of electric trucks. As illustrated in Fig.~\ref{Fig.1}, given a pre-planned route of a truck between its origin $O$ and destination $D$, we assume that $N$ \emph{charging and rest stations} are available along the route, denoted as $S_k$, $k\!=\!0,\dots, N\!-\!1$. The ramp along the pre-planned route leading to station $S_k$ with the shortest detour is denoted as $r_k$. For simplicity, the destination is referred to as ramp $r_N$. The travel time required for the truck to take a detour between its ramp $r_k$ and the station $S_k$ (one-way) is denoted as $d_k$. Moreover, in the pre-planned route, the travel time on the route segment connecting $r_k$ and $r_{k+1}$ is denoted by $\tau_{k+1}$, $k\!=\!0,\dots, N\!-\!2$. Particularly, the truck's travel times from its origin to its first ramp $r_0$, and from its ramp $r_{N-1}$ to the destination are denoted by $\tau_0$ and $\tau_N$, respectively. 

\begin{figure}[t!]
\centering
     \includegraphics[width=0.95\linewidth]{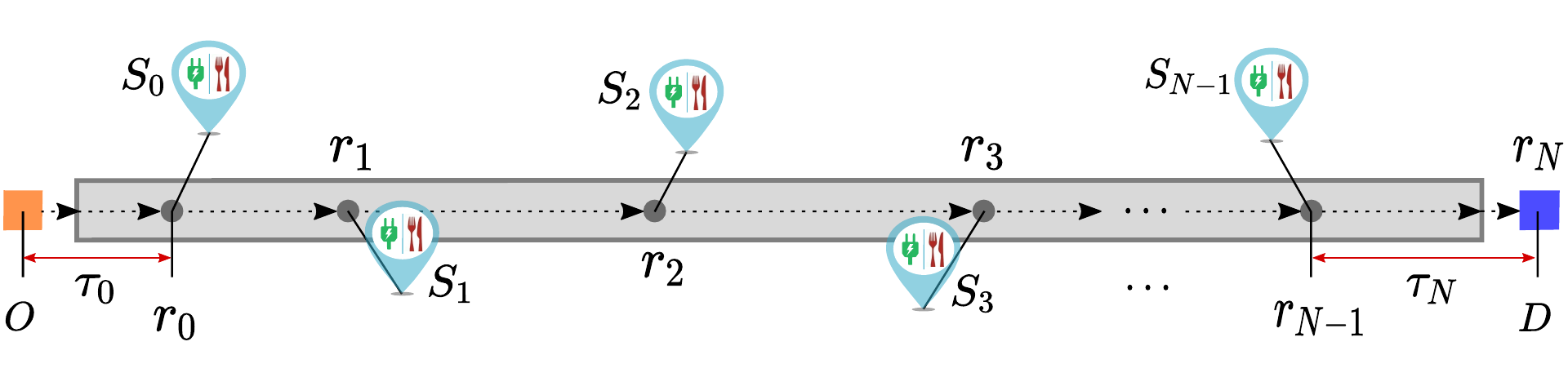}
      \caption{A simplified route model of electric trucks, where each charging and rest station, denoted by $S_k$, provides both charging and rest services. Each ramp in the route leading to $S_k$ with the shortest detour is denoted by $r_k$ and shown by a grey node, where $k\!=\!0,\dots, N\!-\!1$.}
      \label{Fig.1}
\end{figure}

Note that in our route model, each station provides both charging and rest services, allowing drivers to take a rest while the truck is charging. This is aligned with the charging infrastructure building plans nowadays~\cite{speth2022public} as it saves time in logistics. Nevertheless, our problem formulation and solution approach to be introduced later also apply to other cases where some stations provide only charging or rest service.

\subsection{Battery Energy and Consecutive Driving Time}
Given the travel and detour times on route, the initial battery energy, and the delivery deadline, a truck driver could design a charging strategy to complete the delivery mission on time subject to the HoS regulations. The decisions involved in the charging strategy include: (i) whether to charge the truck at $S_k$, $k\!=\!0,\dots, N\!-\!1$; (ii) whether to rest at $S_k$, $k\!=\!0,\dots, N\!-\!1$; and (iii) how long to charge the truck at $S_k$ once decided to charge there. These decisions can be represented by the variables
\vspace{-1pt}
\begin{equation}
    \label{eq:control}
    b_k,\Tilde{b}_k\!\in\!\{0,1\},\;t_k\!\in\!\Re_+,\quad k=0,\dots,N\!-\!1,
\end{equation}
where $b_k\!=\!1$ means charging at $S_k$ and $0$ otherwise, $t_k$ is the planned charging time at $S_k$ if $b_k\!=\!1$, and $\Re_+$ contains nonnegative reals. Similarly, $\Tilde{b}_k\!=\!1$ means resting at $S_k$ and $0$ otherwise. These decisions affect how the battery energy and the consecutive travel time vary when arriving at different ramps, as we introduce next.

To describe the battery dynamics, we denote by $e_k\!\in\! \Re_+$ the \emph{remaining energy in the battery} of the truck when it \emph{first} arrives at $r_k$, $k\!=\!0,\dots,N$. In addition, let $e_{\text{ini}}$ be the initial energy in the battery at the origin, and $\bar{P}$ be the battery consumption of the truck on route per travel time unit. The remaining energy $e_k$ can then be characterized as
\begin{subequations}
\label{eq:battery_energy}
\begin{align}
e_{k+1}&=e_k+b_{k}\Delta e_k-\bar{P}\big(2\bar{b}_kd_k\!+\!\tau_{k+1}\big)\textcolor{blue}{,}\label{Eq.2a}\\
\bar{b}_k&=b_k\lor\tilde{b}_k,\label{Eq.2b}
\end{align}
\end{subequations}
for $k\!=\!0,\dots, N\!-\!1$, with $e_{0}\!=\!e_{\text{ini}}\!-\!\bar{P}\tau_{0}$. In \eqref{Eq.2b}, $\bar{b}_k$ reflects whether the truck visits $S_k$ for charging or rest, where $\lor$ is logical or operator. In \eqref{Eq.2a}, $\Delta e_k$ denotes the charged energy at $S_k$.  As a linear approximation to the charging process, as adopted in \cite{liu2021optimal}, $\Delta e_k$ is modeled as
\begin{equation}
\label{eq:charging}
\Delta{e}_k=t_k\min\big\{P_k,P_{\max}\big\},\quad k=0,\dots, N\!-\!1,
\end{equation}
where $P_k$ denotes the charging power provided by $S_k$, and $P_{\max}$ is the maximum charging power that can be accepted by the battery of the truck.
\begin{figure}[t]
     \centering
     \includegraphics[width=0.95\linewidth]{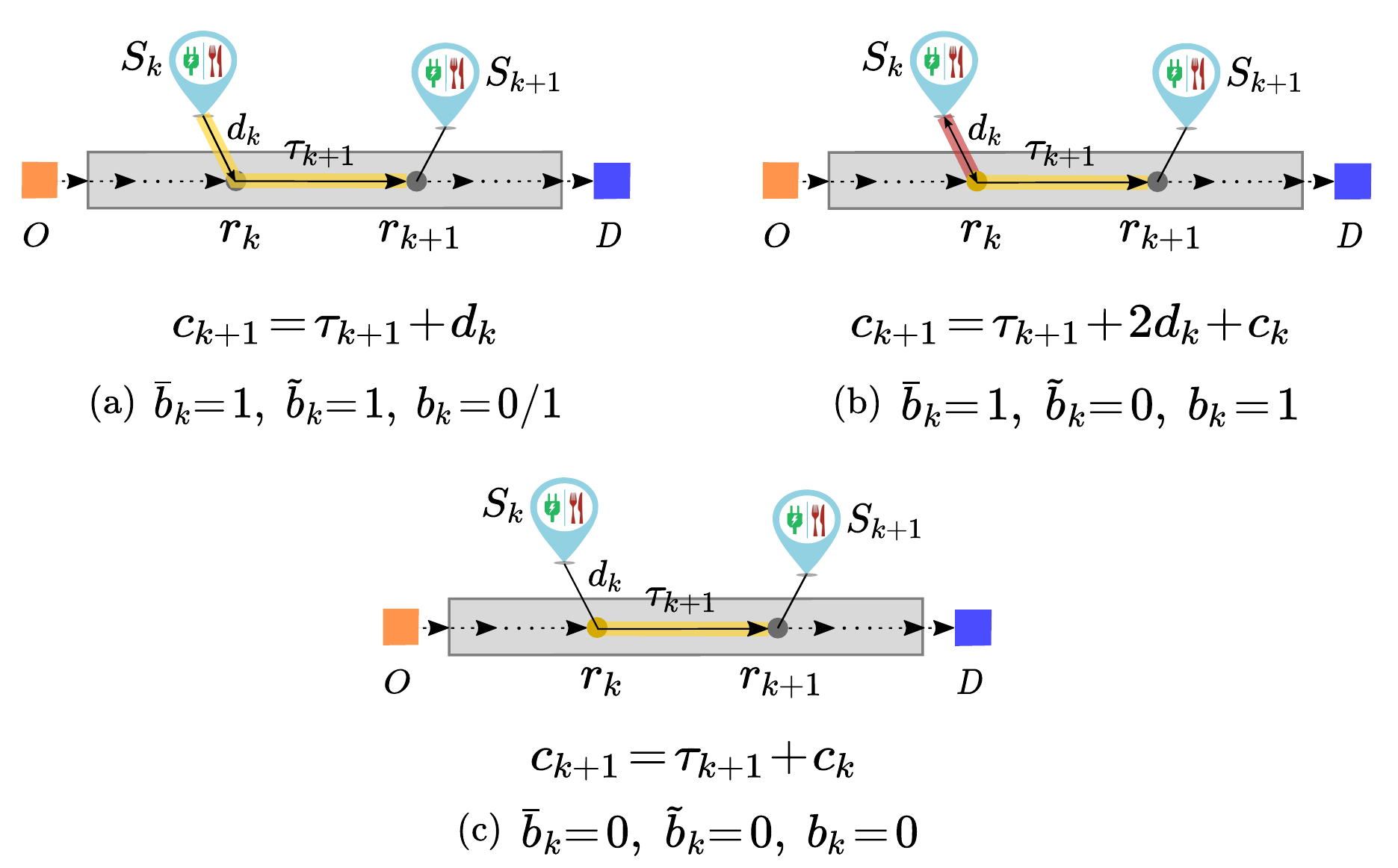}
      \caption{The consecutive driving time when first arriving at $r_{k+1}$.}
      \label{Fig.2}
\end{figure}

To ensure that the HoS regulations are followed under the designed strategy, we also keep track of consecutive driving time upon arriving at each ramp. For this purpose, let us denote by $c_k$, $k\!=\!0,\dots, N$, the consecutive driving time when arriving at $r_k$ for the first time. Its dynamics are given by
\begin{equation}
\label{eq:consecutive_t}
c_{k+1}=\tau_{k+1}+\bar{b}_kd_k+\big(1-\tilde{b}_k\big)\big(c_k+b_kd_k\big),
\end{equation}
for $k=0,\dots,N\!-\!1$, with $c_0\!=\!\tau_0$. Here\textcolor{blue}{,} $\tau_{k+1}\!+\!\bar{b}_kd_k$ is the driving time from $S_k$ to $r_{k+1}$ when $\bar{b}_k\!=\!1$. If the driver takes a rest at $S_k$ (i.e., $\tilde{b}_k\!=\!1$), $S_k$ becomes the start of a new consecutive driving period. Thus, the previous accumulated driving time $c_k$ will not be accounted for, as illustrated by the yellow lines in Fig.~\ref{Fig.2}(a). Otherwise (i.e., $\bar{b}_k\!=\!1$, $\tilde{b}_k\!=\!0$), the accumulated travel time upon arriving at $r_k$ will be taken into account when computing the accumulated travel time at $r_{k+1}$, and a round-way detour between $r_k$ and $S_k$ will be taken, as illustrated by the red and yellow lines in Fig.~\ref{Fig.2}(b). If the driver neither drive to nor rest at $S_k$ (i.e., $\tilde{b}_k\!=\!0,\, b_k\!=\!0$), the consecutive driving time $c_{k+1}$ is then shown in Fig.~\ref{Fig.2}(c).

\subsection{Constraints on the Problem}
In what follows, we introduce the constraints imposed on the charging strategy due to the battery dynamics, HoS regulations, and the delivery deadline.
\subsubsection{Battery Constraints}
Let $e_f$ be the energy of a truck with a full battery. Due to the capacity limitation, the total energy that the truck can be charged at $S_k$ is restricted by
\begin{equation}
\label{eq:charging_bound}
0\leq\Delta{e}_k\leq{e_f-\big(e_k-\bar{P}d_k\big)},\quad k=0,\dots, N\!-\!1,
\end{equation}
where $e_k-\bar{P}d_k$ is the remaining energy in the battery when the truck arrives at $S_k$. 

Furthermore, to ensure that there is sufficient energy for reaching $S_k$, each $e_k$ with $k=0,\dots, N$ shall fulfill
\begin{equation}
    \label{eq:energy_constraint}
e_k\geq{e_s+\bar{P}d_k},\; k=0,\dots,N\!-\!1,\quad 
e_N\geq{e_s},
\end{equation}
where $e_s$ denotes a constant safety margin. 

\subsubsection{HoS Regulations Constraints} The HoS regulations involve three quantities, namely, the maximum consecutive driving time, denoted as $T_d$, the maximum daily driving time, denoted as $\bar{T}_d$, and the minimum mandatory rest time before starting a new consecutive driving period, denoted as $T_r$. 

In line with the HoS regulations, the maximum consecutive driving time shall be bounded by $T_d$. That is, for $k\!=\!0,\dots,N$, the consecutive driving time $c_k$ is restricted by
\begin{equation}
\label{eq:Td_constraint}
c_k+d_k\leq T_d,\; k=0,\dots,N\!-\!1,\quad
c_N\leq T_d.
\end{equation}
Moreover, as the driver's daily driving time is no more than $\bar{T}_d$, we have that
\begin{equation}
    \label{eq:total_Td_constraint}
    \sum_{k=0}^{N}\tau_k+\sum_{k=0}^{N-1}2\bar{b}_kd_k\leq{\bar{T}_d},
\end{equation}
where, as defined above, $\bar{b}_k=b_k\!\lor\!\tilde{b}_k$, so that $\bar{b}_k\!=\!1$ if the truck visits $S_k$ and $0$ otherwise.

When charging at $S_k$, there is a preparation time $p_k$ before the battery can get charged. In addition, we consider staying at $S_k$ over $T_r$ as taking a rest. As a result, when $b_k\!=\!1$ and $\Tilde{b}_k\!=\!0$, the sum $t_k+p_k$ should be less than $T_r$, i.e., $t_k+p_k<T_r$. On the other hand, no such restriction is needed if $\Tilde{b}_k\!=\!1$. These constraints can be described compactly as
\begin{equation}
\label{eq:Tr_constraint}
    b_k\big(t_k+p_k\big)\leq\big(1-\tilde{b}_k\big)\big(T_r-\underline{\delta}\big)+\tilde{b}_k\overline{\delta},
\end{equation}
for $k=0,\dots,N-1$, where $\underline{\delta}$ is some small positive constant so that constraint $t_k+p_k\!<\!T_r$ is approximated by $t_k+p_k\leq T_r\!-\!\underline{\delta}$. The large constant $\overline{\delta}$ is introduced to approximate unboundedness above.

\subsubsection{Delivery Deadline Constraint}
Let the total time allowed to complete the trip be $\Delta T\!+\!\sum_{k=0}^{N}\tau_k$, where $\Delta T$ provides an upper bound on the extra time spent due to charging and rest. Then the constraint imposed by the deadline is 
\begin{equation}
    \label{eq:deadline}
    \sum_{k=0}^{N-1}\max\Big\{b_k\big(2d_k\!+\!p_k\!+\!t_k\big), \tilde{b}_k\big(2d_k\!+\!T_r\big)\Big\}\leq \Delta T.
\end{equation}

\section{Exact Solution to the Optimal Charging Problem with HoS Regulations}\label{Section III}
This section presents the optimization problem for determining the optimal charging strategy while fulfilling the HoS regulations. We start by introducing the optimal charging problem, followed by the exact solution and the computational complexity analysis of the problem.

\subsection{Optimal Charging Problem}
\subsubsection{Cost Function}
Our goal is to complete the delivery mission on time under the HoS regulations while saving operational costs. This includes the cost of charging and economic loss due to extra labor costs. Specifically, the expenses resulting from charging the truck at selected stations along its route are defined as
\begin{equation*}
    F_1\big(b_0,t_0,\dots,b_{N-1},t_{N-1}\big)=\sum_{k=0}^{N-1}\xi_{k}b_kt_k,
\end{equation*}
where $\xi_{k}$ represents the electricity price per charging time unit in accordance with the charging power at $S_k$, and $t_k$ is the charging time at $S_k$. 

In addition, the cost due to the extra travel time during the entire trip is represented as
\begin{align}
&F_2\big(b_0,\tilde{b}_0,t_0,\dots,b_{N-1},\tilde{b}_{N-1},t_{N-1}\big)\nonumber\\
=&\sum_{k=0}^{N-1}\max\Big\{b_k\big(2d_k\!+\!p_k\!+\!t_k\big), \tilde{b}_k\big(2d_k\!+\!T_r\big)\Big\}\varepsilon,\label{Eq.14}
\end{align}
where, as previously defined, $T_r$ represents the minimum mandatory rest time specified by the HoS regulations. The monetary loss per extra travel time unit is denoted by $\varepsilon$. 

The cost function of the optimal charging problem is then of the following form
\begin{align}
&F\big(b_0,\tilde{b}_0,t_0,\dots,b_{N-1},\tilde{b}_{N-1},t_{N-1}\big)\nonumber\\
= &F_1\big(b_0,t_0,\dots,b_{N-1},t_{N-1}\big)+\nonumber\\
&F_2\big(b_0,\tilde{b}_0,t_0,\dots,b_{N-1},\tilde{b}_{N-1},t_{N-1}\big),\label{Eq.15}
\end{align}
which includes the cost of charging and the cost of extra travel time for completing the delivery mission.

\subsubsection{Optimization Problem}
Based on the battery dynamics, consecutive driving times, HoS regulations, and delivery deadline constraints formulated in Section~\ref{Section II}, as well as the cost function given above, the optimal charging strategy can be obtained by solving the following optimization problem
\begin{align*}
\min_{\{(b_k,\Tilde{b}_k,t_k)\}_{k=0}^{N-1}}&\quad F\big(b_0,\tilde{b}_0,t_0,\dots,b_{N-1},\tilde{b}_{N-1},t_{N-1}\big)\\
    \mathrm{s.\,t.}&\quad\eqref{eq:control}-\eqref{eq:deadline},
\end{align*}
where \eqref{eq:control} defines the domains of the decision variables $b_k$, $\Tilde{b}_k$, and $t_k$, \eqref{eq:battery_energy} and \eqref{eq:charging} characterize the battery dynamics during driving and charging, and \eqref{eq:consecutive_t} describes the consecutive driving times upon arriving at each ramp. The constraints imposed by the battery capacity and its safety margin are \eqref{eq:charging_bound} and \eqref{eq:energy_constraint}. The HoS regulations are characterized by \eqref{eq:Td_constraint}-\eqref{eq:Tr_constraint}. The constraint related to the delivery deadline is \eqref{eq:deadline}. The sufficient conditions under which the problem is feasible are given in Appendix~\ref{app:e}.

Note that the proposed formulation is flexible to incorporate various modifications, such as taking the sum of $\Bar{b}_k$ as the cost function \eqref{Eq.15} for a sparse selection of the stations, or replacing the linear approximation of battery dynamics \eqref{eq:charging} to nonlinear ones. For simplicity, we focus on the present setting.
 
\subsection{Exact Solution}
The optimal charging problem formulated above is a mixed integer program with bilinear constraints. Thus, it cannot be directly addressed by many standard solvers. To obtain the exact solution to the problem, one could iterate over all possible combinations of integer variables. Since the integer variables $b_k$ and $\tilde{b}_k$ admit $4$ combinations at each station, i.e., $(0,0)$, $(0,1)$, $(1,0)$, $(1,1)$, there are in total $4^{N}$ charging and rest choices, where $N$ is the number of stations. Therefore, the exact solution requires solving $4^N$ linear programs, which leads to high computational demands and is not practical.

Note that the bilinear constraints can be transformed into linear ones so that the problem becomes a standard mixed integer linear program. We demonstrate this transformation in Appendix~\ref{app:f}. However, the exact solution to the transformed problem may still require an exponential number of iterations; see \cite[p.~480]{bertsimas1997introduction}. Moreover, if the linear approximation of charging in \eqref{eq:charging} is replaced by nonlinear functions of $t_k$, such transformations would become obsolete. 

To obtain tractable charging strategies, especially for long-haul trips with many candidate charging and rest stations, a rollout-based approximate solution to the optimal charging problem is proposed in the following section.

\section{Approximate Solution to the Optimal Charging Problem via Rollout}\label{Section IV}
In this section, we introduce the proposed rollout scheme for the optimization problem formulated in Section~III. We first describe a basic form of the method within the context of a general mixed integer program, which is modified from the methods introduced in \cite{bertsekas2005rollout} and \cite[Section~3.4]{bertsekas2020rollout}. It is followed by a variant of the scheme. Then we demonstrate how the basic form, as well as its variant, can be applied to obtain an approximate solution to the optimal charging problem. In the Appendix, we provide an orientation for the connection between the mixed integer program studied here and general optimal control problems where the rollout method is originally devised, leading to further insights into our method. Based on this connection, additional variants are introduced there as well.

\subsection{Rollout for Mixed Integer Program}
Let us consider the following mixed integer program:
    \begin{align}
	\min_{(u,v)}& \quad G(u,v)\quad \mathrm{s.\,t.}~(u,v)\in\overline{C}, \label{eq:mixed_int}
	\end{align}
where $u=(u_0,\dots,u_{N-1})$ is composed of discrete elements, with each element $u_k$ belonging to a finite discrete set $U_k$, i.e., $u_k\!\in\! U_k$, $k\!=\!0,\dots,N\!-\!1$, and $v\!\in\! \Re^m$ where $\Re^m$ is the $m$-dimensional Euclidean space. The function $G$ maps elements in $U\!\times\! \Re^m$ to real numbers with $U\!=\!U_0\times \dots\times U_{N-1}$, and $\overline{C}$ is a nonempty subset of $U\!\times\! \Re^m$. 

When favorable structures are absent, problem \eqref{eq:mixed_int} can be difficult to address. A naive approach is to enumerate all possible values of $u$, and then solve just as many optimization problems that involve only the continuous variable $v$. However, the number of such problems could increase exponentially as the dimension of $u$ increases. On the contrary, the number of continuous optimization problems involved in our scheme grows only linearly with $N$, as we will see shortly. 

For our proposed scheme to find a feasible solution in theory, we assume that there is a known $\Bar{u}\!=\!(\Bar{u}_0,\dots,\Bar{u}_{N-1})$, referred to as the \emph{base solution}, such that $(\Bar{u},\Bar{v})\in \overline{C}$ for some $\Bar{v}\!\in\! \Re^m$. In other words, if we define a set $C$ as
\begin{equation}
    \label{eq:feasible_u}
    C\!=\!\big\{u\!\in\! U\,|\,\text{$(u,v)\!\in\! \overline{C}$ for some $v\!\in\! \Re^m$}\big\},
\end{equation}
then our scheme relies on the assumption that some $\Bar{u}\in C$ is known. Based on this condition, the proposed method focuses on the discrete variables \emph{one at a time}. In particular, it first computes the $\tilde{u}_0$ via solving
\begin{equation}
    \label{eq:rollout_mixed_int_0}
    \begin{aligned}
    \Tilde{u}_0\!\in\!\arg\min_{u_0\in U_0}\min_{v\in \Re^m}& \quad G(u_0,\Bar{u}_1,\dots,\Bar{u}_{N-1},v)\\
	\mathrm{s.\,t.} &\quad (u_0,\Bar{u}_1,\dots,\Bar{u}_{N-1},v)\!\in\! \overline{C}.
    \end{aligned}    
\end{equation}
Having computed $\Tilde{u}_0$, it proceeds by solving
\begin{equation}
    \label{eq:rollout_mixed_int_1}
    \begin{aligned}
    \Tilde{u}_1\!\in\!\arg\min_{u_1\in U_1}\min_{v\in \Re^m}&~ G(\Tilde{u}_0,u_1,\Bar{u}_2,\dots,\Bar{u}_{N-1},v)\\
	\mathrm{s.\,t.} &~(\Tilde{u}_0,u_1,\Bar{u}_2,\dots,\Bar{u}_{N-1},v)\!\in\! \overline{C}.
    \end{aligned}    
\end{equation}
At last, it solves 
\begin{equation}
    \label{eq:rollout_mixed_int_end}
    \begin{aligned}
    \!\!\!\!\!\Tilde{u}_{N-1}\!\in\!\arg\min_{u_{N-1}\in U_{N-1}}\min_{v\in \Re^m}&~ G(\Tilde{u}_0,\dots, \Tilde{u}_{N-2},u_{N-1},v)\\
	\mathrm{s.\,t.} &~ (\Tilde{u}_0,\dots, \Tilde{u}_{N-2},u_{N-1},v)\!\in\!\overline{C}.
    \end{aligned}    
\end{equation}
Denoting as $\Tilde{u}$ the solution $(\Tilde{u}_0,\dots,\Tilde{u}_{N-1})$ computed above, referred to as the \emph{rollout solution}, the approximate solution obtained via our scheme is $(\Tilde{u},\Tilde{v})$ where 
\begin{equation}
    \label{eq:rollout_mixed_con}
    \Tilde{v}\in \arg\min_{(\Tilde{u},v)\in \overline{C}}G(\Tilde{u},v).
\end{equation}

We have the following result for the proposed scheme.
\begin{proposition}\label{prop:rollout_mixed_int}
Let $\Bar{u}\!\in\! C$ and consider $(\Tilde{u},\Tilde{v})$ obtained via \eqref{eq:rollout_mixed_int_0}-\eqref{eq:rollout_mixed_con}. We have that $(\Tilde{u},\Tilde{v})\!\in\! \overline{C}$ and
\begin{equation}
    \label{eq:rollout_bound}
    G(\Tilde{u},\Tilde{v})\leq \min_{(\Bar{u},v)\in \overline{C}}G(\Bar{u},v).
\end{equation}
\end{proposition}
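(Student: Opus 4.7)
The plan is to prove both claims simultaneously by induction on the step index $k$, exploiting the fact that at every stage the \emph{fallback} choice $u_k=\bar u_k$ is always admissible. For $k=0,1,\dots,N-1$ let $J_k$ denote the optimal value of the minimization program that defines $\tilde u_k$ in \eqref{eq:rollout_mixed_int_0}--\eqref{eq:rollout_mixed_int_end}, and set
\begin{equation*}
J_{-1}=\min_{v\in\Re^m,\,(\bar u,v)\in\overline{C}} G(\bar u,v),
\end{equation*}
which is finite and attained because $\bar u\in C$. The inductive statement I would carry is twofold: \emph{(i)} the step-$k$ program is feasible and its minimum is attained, and \emph{(ii)} $J_k\le J_{k-1}$.

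The key observation driving the inductive step is that substituting $u_k=\bar u_k$ into the step-$k$ program reduces it to a single-variable minimization in $v$ over the constraint $(\tilde u_0,\dots,\tilde u_{k-1},\bar u_k,\bar u_{k+1},\dots,\bar u_{N-1},v)\in\overline{C}$; this is exactly the inner $v$-minimization inside the step-$(k-1)$ program evaluated at its optimizer $u_{k-1}=\tilde u_{k-1}$, whose value equals $J_{k-1}$ by the definition of $\tilde u_{k-1}$. Since the step-$k$ program is a joint minimization over $u_k$ and $v$, this admissible choice simultaneously certifies feasibility and delivers the bound $J_k\le J_{k-1}$. The base case $k=0$ is the same argument with no previously fixed $\tilde u$'s: taking $u_0=\bar u_0$ recovers the program defining $J_{-1}$, which is feasible by the assumption $\bar u\in C$ in \eqref{eq:feasible_u}.

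To conclude, the step-$(N-1)$ program contains no remaining base-solution slots, so once $\tilde u_{N-1}$ is substituted its residual $v$-minimization is precisely $\min_{(\tilde u,v)\in\overline{C}} G(\tilde u,v)$. By \eqref{eq:rollout_mixed_con} this minimum equals $G(\tilde u,\tilde v)$ and is attained, hence $(\tilde u,\tilde v)\in\overline{C}$. Telescoping the chain $J_{N-1}\le J_{N-2}\le\cdots\le J_{-1}$ then yields the cost bound \eqref{eq:rollout_bound}.

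The argument is largely bookkeeping; the only point that requires care, and what I would flag as the main potential pitfall, is correctly identifying the residual $v$-problem obtained after the fallback substitution $u_k=\bar u_k$ with the inner $v$-problem that the previous step has already solved at its optimizer, so that the telescope closes cleanly. An implicit regularity requirement is that each inner minimum be attained; in the setting of Section~\ref{Section III} this holds automatically because each $U_k$ is finite and the inner programs are linear, but at the abstract level of \eqref{eq:mixed_int} it needs, for instance, closedness of the continuous slices of $\overline{C}$ together with boundedness from below of $G$ on them. Should attainment fail, replacing each argmin with an $\varepsilon$-argmin recovers the same bound up to an arbitrarily small~$\varepsilon$.
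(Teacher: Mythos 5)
Your proof is correct, but it takes a more direct route than the paper. The paper does not telescope on the original problem; instead it constructs an equivalent deterministic optimal control problem -- fictitious states $x_k$ that are the prefixes $(u_0,\dots,u_{k-1})$, zero stage costs, and terminal cost $g_N(x_N)=\min_{(x_N,v)\in\overline{C}}G(x_N,v)$ (set to $\infty$ off $C$) -- identifies the sequence \eqref{eq:rollout_mixed_int_0}--\eqref{eq:rollout_mixed_int_end} with the rollout iteration \eqref{eq:rollout_op}, and then invokes the general cost-improvement property of rollout (Prop.~\ref{prop:rollout_original}); feasibility of $\Tilde{u}$ follows there from $J_{\Tilde{\pi}}(x_0)\leq J_{\Bar{\pi}}(x_0)<\infty$ and the $\infty$-penalty encoding of infeasibility. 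Your induction $J_{N-1}\le\cdots\le J_{-1}$, driven by the observation that the fallback $u_k=\bar u_k$ reproduces the step-$(k-1)$ inner $v$-problem at its optimizer, is precisely the argument that proves that general property, so the mathematical content is the same; what you gain is a self-contained, elementary proof with explicit feasibility propagation (no need for the $\infty$-valued terminal cost device), while the paper's reformulation buys the Newton's-method interpretation and lets the same machinery justify the variants in Appendices~\ref{app:c} and \ref{app:d} without new proofs. Your remark on attainment of the inner minima is a fair point that the paper also leaves implicit (its $g_N$ and \eqref{eq:rollout_mixed_con} presuppose the same thing), and your finite-$U_k$ plus $\varepsilon$-argmin fallback disposes of it adequately.
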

\begin{proof}
See Appendix~\ref{app:b}. 
\end{proof}
\begin{remark}
Denote as $n$ the maximum number of elements contained in $U_{k}$. The naive scheme involves solving as many as $n^N$ continuous optimization problems, while our rollout scheme requires solving at most $nN$ such problems. Supposing that polynomial-time algorithms are used for continuous problems, our scheme can be executed in polynomial time. 
\end{remark}
\begin{remark}\label{rmk:infeasible}
The rollout scheme \eqref{eq:rollout_mixed_int_0}--\eqref{eq:rollout_mixed_int_end} can be carried out even starting from $\Bar{u}\!\not\in\! C$. In this case, the resulting rollout solution $\Tilde{u}$ may still be feasible.    
\end{remark}

\subsection{Variant of the Rollout Scheme}
The proposed scheme admits a few variants. Here we discuss one that is particularly relevant to our application. Additional variants are given in Appendices~\ref{app:c} and \ref{app:d}.

Suppose that $l$ different base solutions $\Bar{u}^1,\dots,\Bar{u}^\ell\!\in\!{U}$ are known. We can obtain their respective rollout solutions $\Tilde{u}^1,\dots,\Tilde{u}^\ell$ as well as the corresponding minimizing $\Tilde{v}^1,\dots,\Tilde{v}^\ell$. We then select $(\Tilde{u}^{i^*},\Tilde{v}^{i^*})$ where ${i^*}\!\in\! \arg\min_{i}\{G(\Tilde{u}^i,\Tilde{v}^i)\}_{i=1}^\ell$. Clearly, we have the following performance bound
\begin{equation}
    \label{eq:parallel_rollout_bound}
    G(\Tilde{u}^{i^*},\Tilde{v}^{i^*})\leq \min_{i\in \{1,\dots,\ell\}}\min_{(\Bar{u}^i,v)\in \overline{C}}G(\Bar{u}^i,v).
\end{equation}
However, this is at the expense of the increased computational demands, which are $\ell$-fold of that of the original scheme.

\subsection{Rollout-Based Charging Strategy}
In what follows, we show that the charging problem formulated in Section~\ref{Section III} belongs to the class of generic problem \eqref{eq:mixed_int}. As a result, the rollout scheme and its variant developed thus far can be applied to provide charging strategies. 

To this end, let us define as $u_k$ the pair $(b_k,\Tilde{b}_k)$, $k\!=\!0,\dots,N\!-\!1$, and $u\!=\!(u_0,\dots,u_{N-1})$. Accordingly, $U_k=\{(0,0),(1,0),(0,1),(1,1)\}$. We lump all the continuous variables involved in the charging problem as $v$, namely,
$$v\!=\!(t_0,e_0,\Delta{e}_0,c_0,\dots,t_{N-1},e_{N-1},\Delta{e}_{N-1},c_{N-1},e_N,c_N).$$
Then the function $F$ defined in \eqref{Eq.15} can be written as a function of $(u,v)$, which we denote as $G$. Moreover, let $\overline{C}$ denote the set of $(u,v)$ that fulfills the conditions \eqref{eq:control}-\eqref{eq:deadline}. Via the change of variables introduced here, the charging problem can be seen as an instance of the generic problem \eqref{eq:mixed_int}. 

To obtain charging strategies via the rollout scheme, we use two different base solutions. The first solution $\Bar{u}^1$ is referred to as the \emph{greedy solution}. Intuitively, the greedy solution sets $\Bar{u}_k^1\!=\!(1,1)$ if the battery energy $e_{k+1}$ upon arriving at $r_{k+1}$ does not fulfill constraint \eqref{eq:energy_constraint} without charging at $S_k$. Moreover, once $\Bar{u}^1_k\!=\!(1,1)$, the battery is fully charged at $S_k$. Another base solution $\Bar{u}^2$, referred to as the \emph{relaxed solution}, is obtained via solving a relaxation of the original problem, where the binary constraints are replaced by closed intervals $[0,1]$. If the optimal value for binary variables is nonzero, the relaxed solution sets respective binary variables to $1$. Apart from a base solution, the relaxation of the original problem also provides a lower bound of the optimal cost of the original problem. Together with the upper bounds \eqref{eq:rollout_bound} and \eqref{eq:parallel_rollout_bound}, we obtain a certificate for the optimality gap of the rollout scheme.

Note that either one of the two base solutions may not be feasible as they involve approximations of the original problem. Due to the presence of the HoS constraint \eqref{eq:total_Td_constraint} and the delivery deadline \eqref{eq:deadline}, computing a feasible base solution can be as hard as solving the original problem. On the other hand, owing to reasons discussed in Remark~\ref{rmk:infeasible}, both $\Bar{u}^1$ and $\Bar{u}^2$ are used in our simulation studies, and together they suffice for the practical needs. 
\section{Simulation Studies}\label{Section V}
\begin{figure*}[t]
\centering
\begin{minipage}{1\textwidth}
\centering
\subfigure[Swedish road network]
{\includegraphics[width=0.325\textwidth]{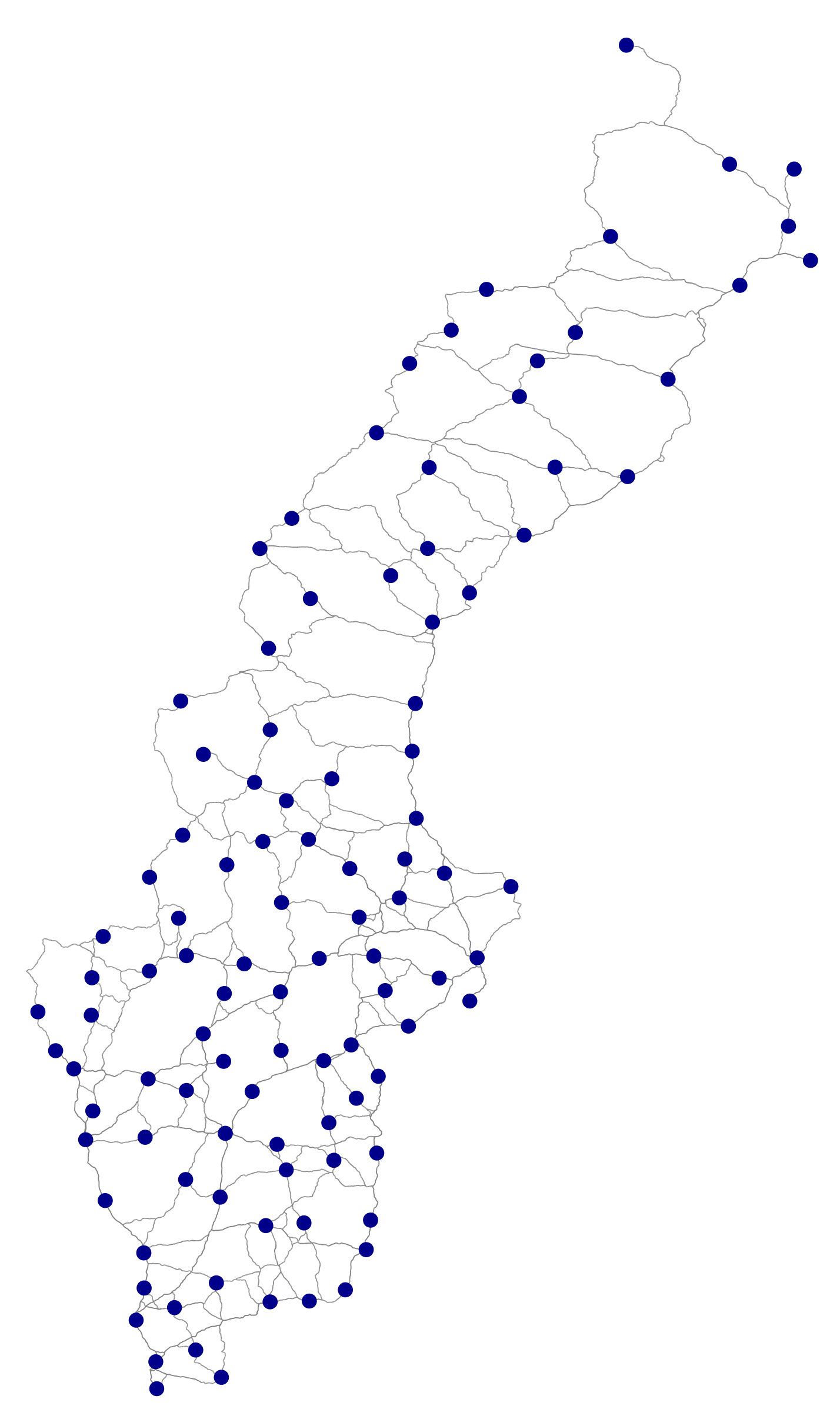}}
\subfigure[Potential charging and rest stations considered]
{\includegraphics[width=0.33\textwidth]{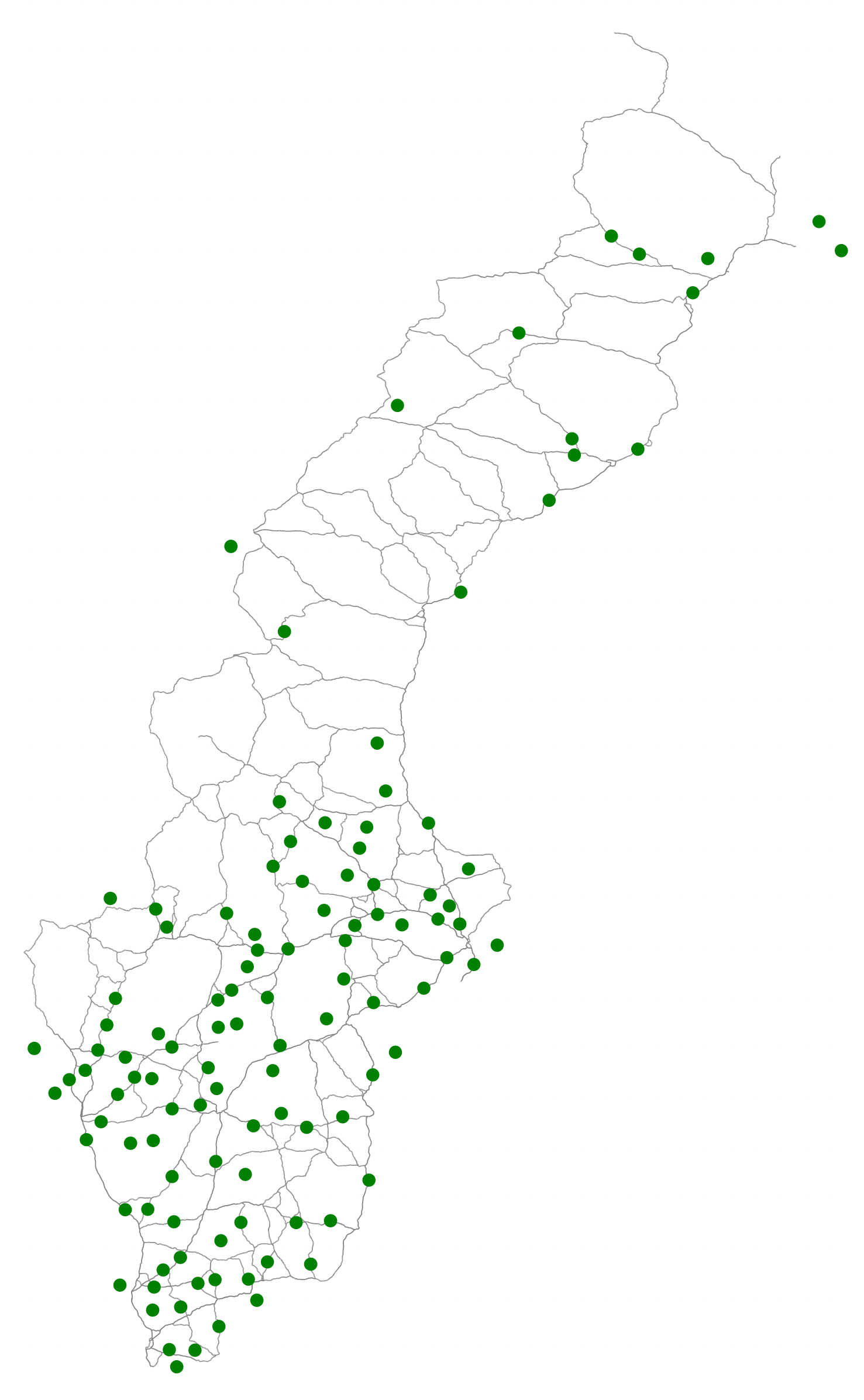}}
\subfigure[Route model of one truck]
{\includegraphics[width=0.33\textwidth]{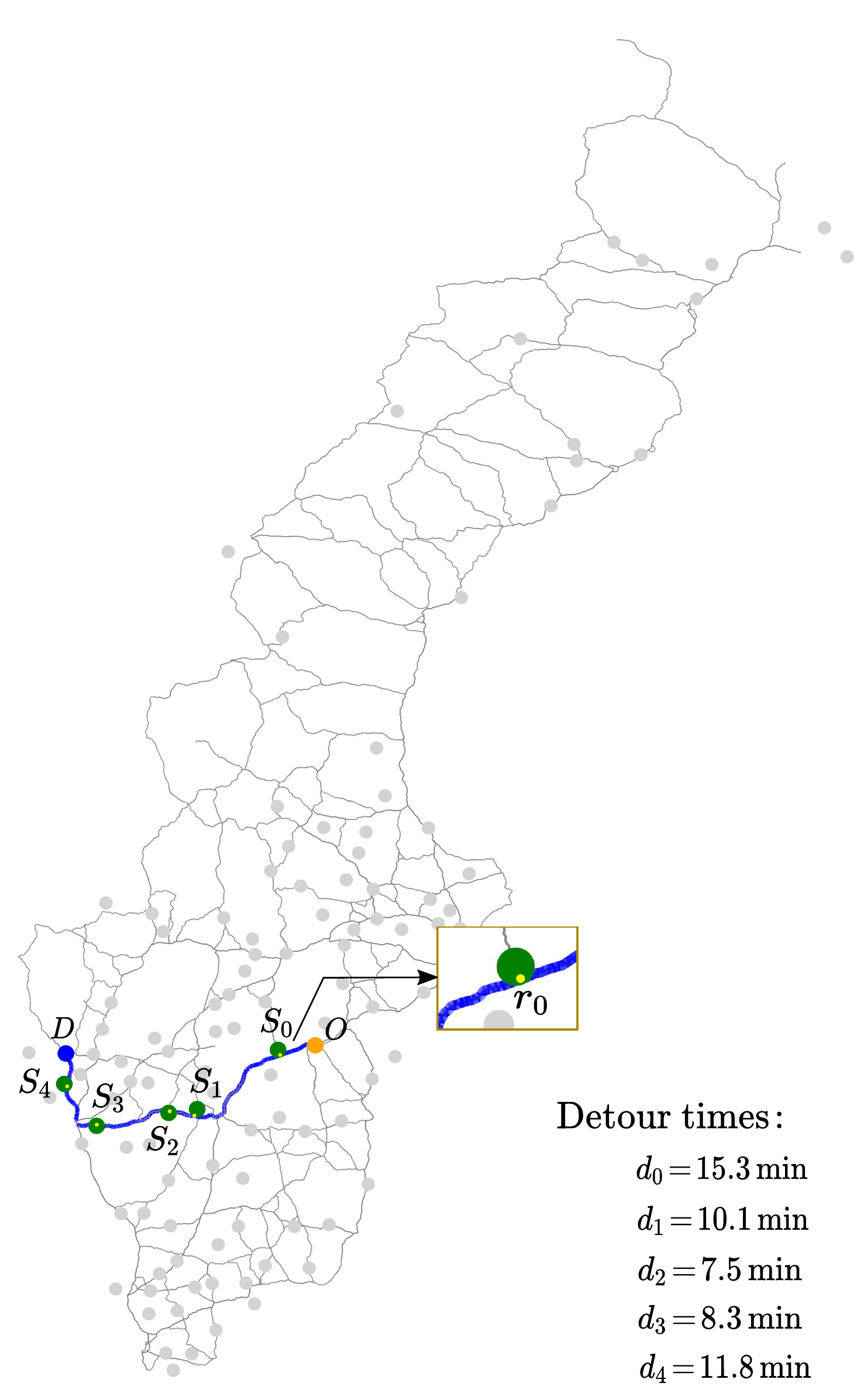}}
\vspace{-6pt}
\end{minipage}
\caption{(a) Swedish road network with $105$ road terminals, from which the OD pair of each delivery mission is selected. (b) The potential charging and rest stations considered are shown by the green nodes. (c) The transport route model of one truck, where $5$ charging and rest stations are available, and ramps leading to the shortest detours to stations are shown by the yellow nodes.}
\label{Fig.3}
\end{figure*}

\subsection{Setup}
\subsubsection{Transport Route}
We consider the Swedish road network with $105$ real road terminals, with each terminal aggregating the freight transportation demand within one region. The coordinates of the road terminals, as shown by the blue nodes in Fig.~\ref{Fig.3}(a), are obtained from the SAMGODS model~\cite{bergquist2016representation}, which is employed by public authorities to analyze and predict freight transport flows between different regions in Sweden. We generate the delivery missions for trucks by randomly selecting their origin and destination pairs (i.e., OD pairs) from the $105$ road terminals. As only a very few charging stations for electric trucks are in operation nowadays, the other real road terminals obtained from the SAMGODS model (except for those considered as origins and destinations) are used as potential charging and rest stations in our simulation, as shown by the green nodes in Fig.~\ref{Fig.3}(b). Given each OD pair related to a delivery mission, the shortest route between the OD pair is pre-planned and obtained from \textit{OpenStreetMap}\cite{OpenStreetMap}. Given a certain search range, the charging and rest stations along the route are identified. Accordingly, the travel times $\{\tau_k\}$ on each segment of the route and the detour times $\{d_k\}$ with $k\!=\!0,\dots, N\!-\!1$ are accessible from \textit{OpenStreetMap}. The route model of one truck is illustrated in Fig.~\ref{Fig.3}(c).  

The latest published data for electric trucks manufactured by Scania~\cite{ElectricTruck} is employed in setting the parameters. We consider electric trucks at a load capacity of $40$ tonnes with an installed battery capacity of $624$ kWh and a usable battery capacity of $468$ kWh, with up to $350$ kilometers driving range. The usable battery energy is $e_f-e_s$, which can be varied from $0$ to $468$ kWh. For safe operation purposes, $e_s$ is set as $25\%$ of the installed battery capacity. We assume that trucks drive at a constant speed of $82$ km/h, resulting in approximately $1.83$ kWh/min of battery consumption on the route. In addition, the electricity price for charging is considered as $0.36$ \texteuro/kWh, and the monetary loss per minute due to extra travel time is $0.4$ \texteuro, based on truck drivers' salaries per hour in Sweden in 2023. We apply the EU's HoS regulations nowadays, where $T_d$ is $4.5$ hours, $T_r$ is $45$ minutes, and $\bar{T}_d$ equals $9$ hours. For each trip, $\Delta{T}$ is considered as $150$ minutes. The values of other parameters are provided in Table~\ref{Table1}.

\subsubsection{Parameter Settings} 
\begin{table}[t]
\caption{Parameter Values} 
\vspace{-5pt}
\centering
\begin{tabular}{|c|c|c|c|c|} 
\hline
& & & & \\[-1.2ex]
  \raisebox{1.3ex}{$P_k$ [kW]}& \raisebox{1.3ex}{$P_{\max}$ [kW]}& \raisebox{1.3ex}{$e_f$ [kWh]}&\raisebox{1.3ex}{$\bar{P}$ [kWh/min]} &\raisebox{1.3ex}{$p_k$ [min]}
\\ [-0.5ex]
\hline 
& & & & \\[-0.6ex]
\raisebox{1.ex}{$300$} & \raisebox{1.ex}{$375$} & \raisebox{1.ex}{$624$} & \raisebox{1.ex}{$1.83$} & \raisebox{1.ex}{$6$}\\[-0.1ex]
\hline
\end{tabular}
\label{Table1}
\end{table}

\subsection{Solution Evaluation}
\begin{figure}[t]
     \centering
     \includegraphics[width=0.985\linewidth]{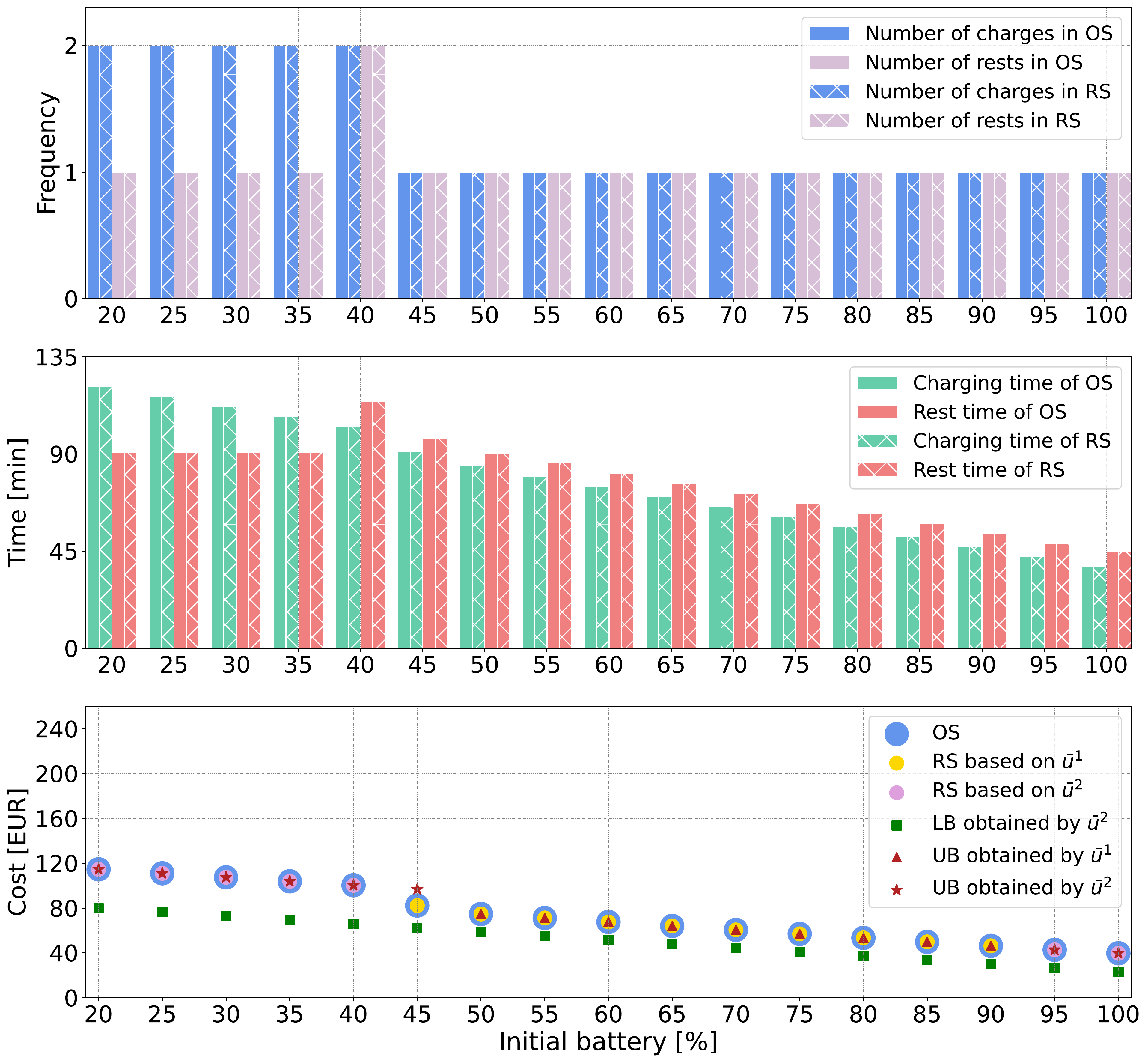}
     \vspace{-3pt}
      \caption{Comparison results of scenario 1 ($N\!=\!5$).}
      \label{Fig.4}
\end{figure}

\begin{figure}[t]
    \centering
    \includegraphics[width=0.985\linewidth]{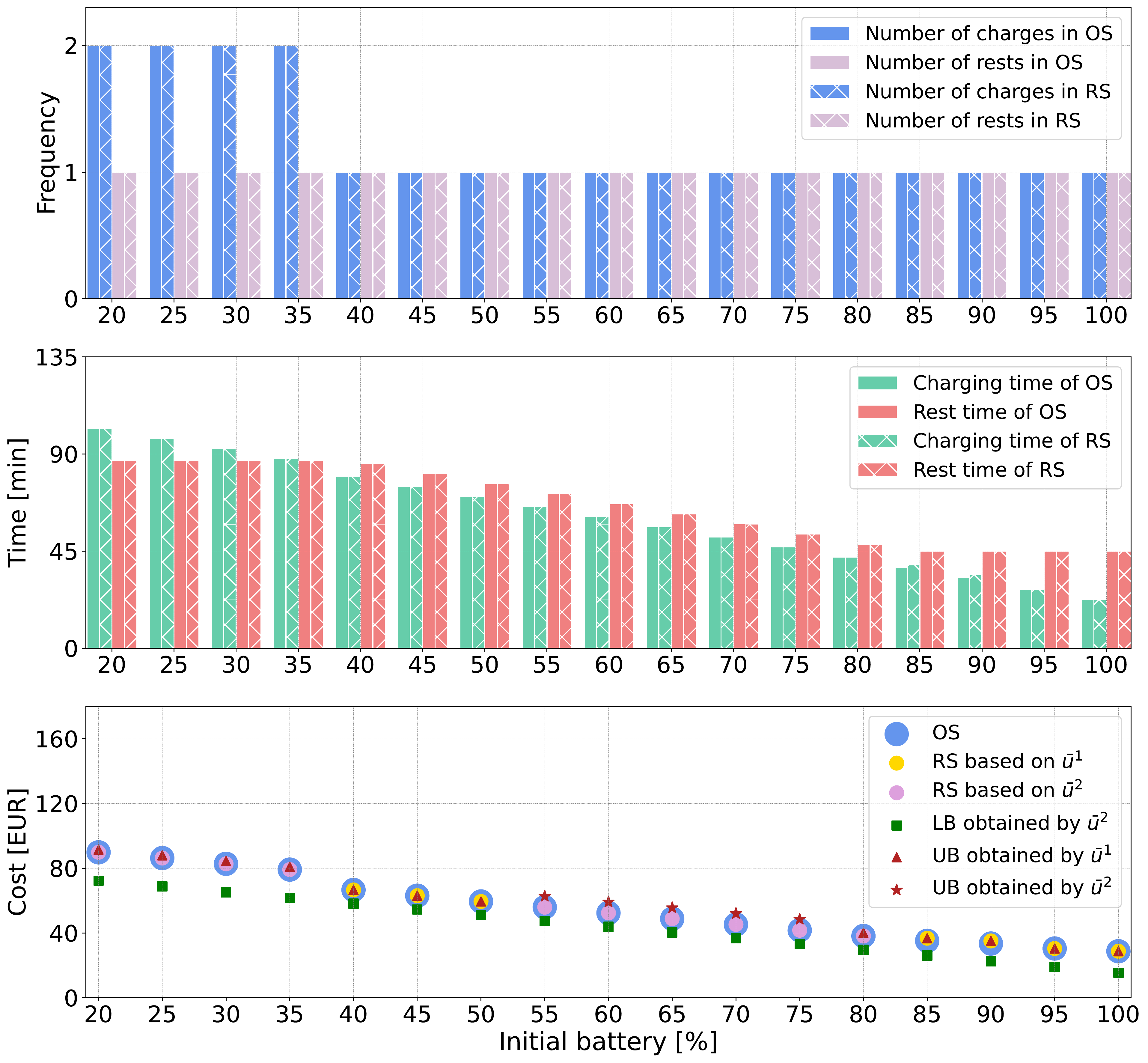}
    \vspace{-3pt}
     \caption{Comparison results of scenario 2 ($N\!=\!6$).}
     \label{Fig.5}
\end{figure}

\begin{figure}[t]
   \centering
   \includegraphics[width=0.985\linewidth]{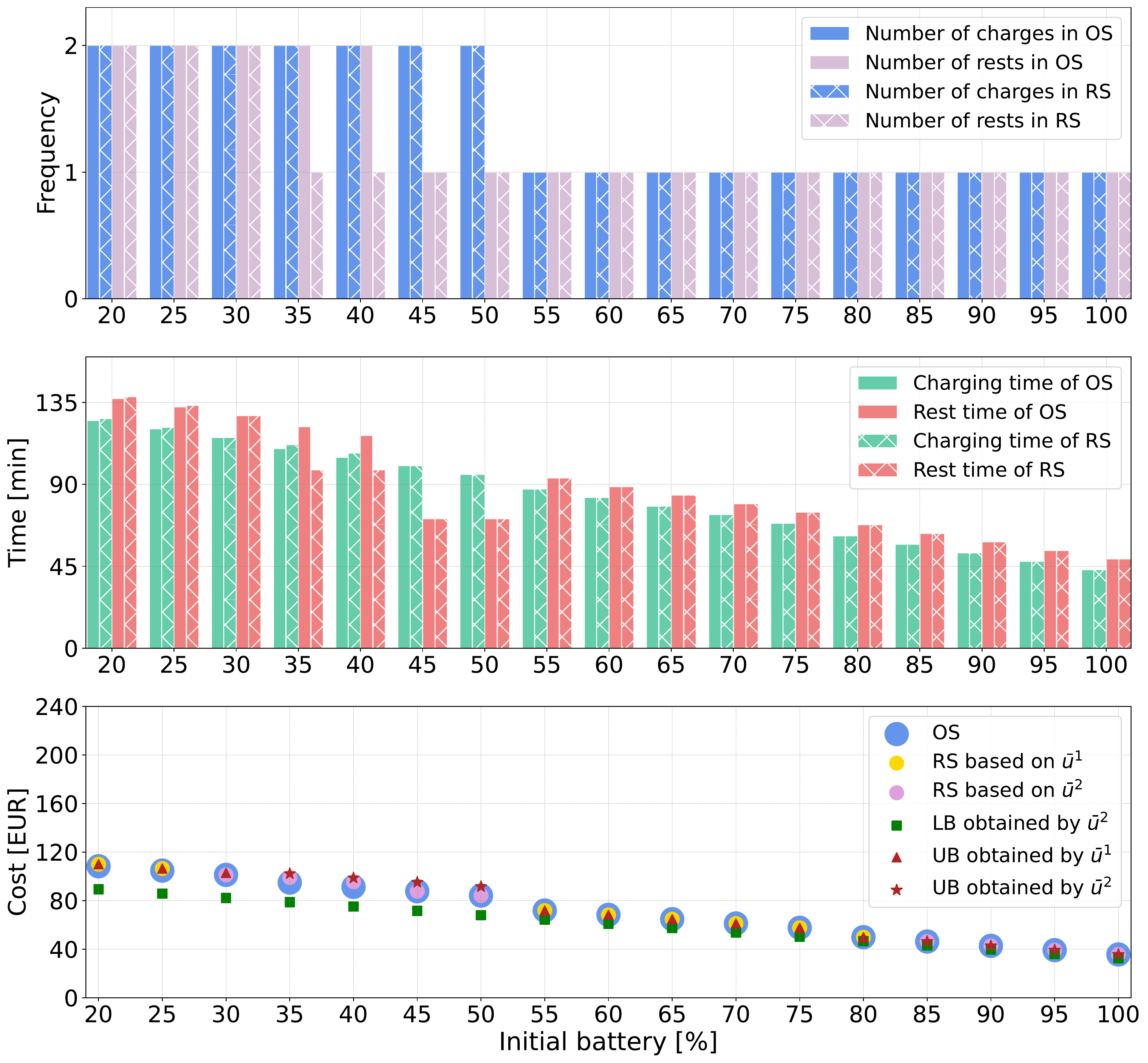}
   \vspace{-3pt}
   \caption{Comparison results of scenario 3 ($N\!=\!7$).}
   \label{Fig.6}
\end{figure}

\begin{figure}[t]
   \centering
\includegraphics[width=0.983\linewidth]{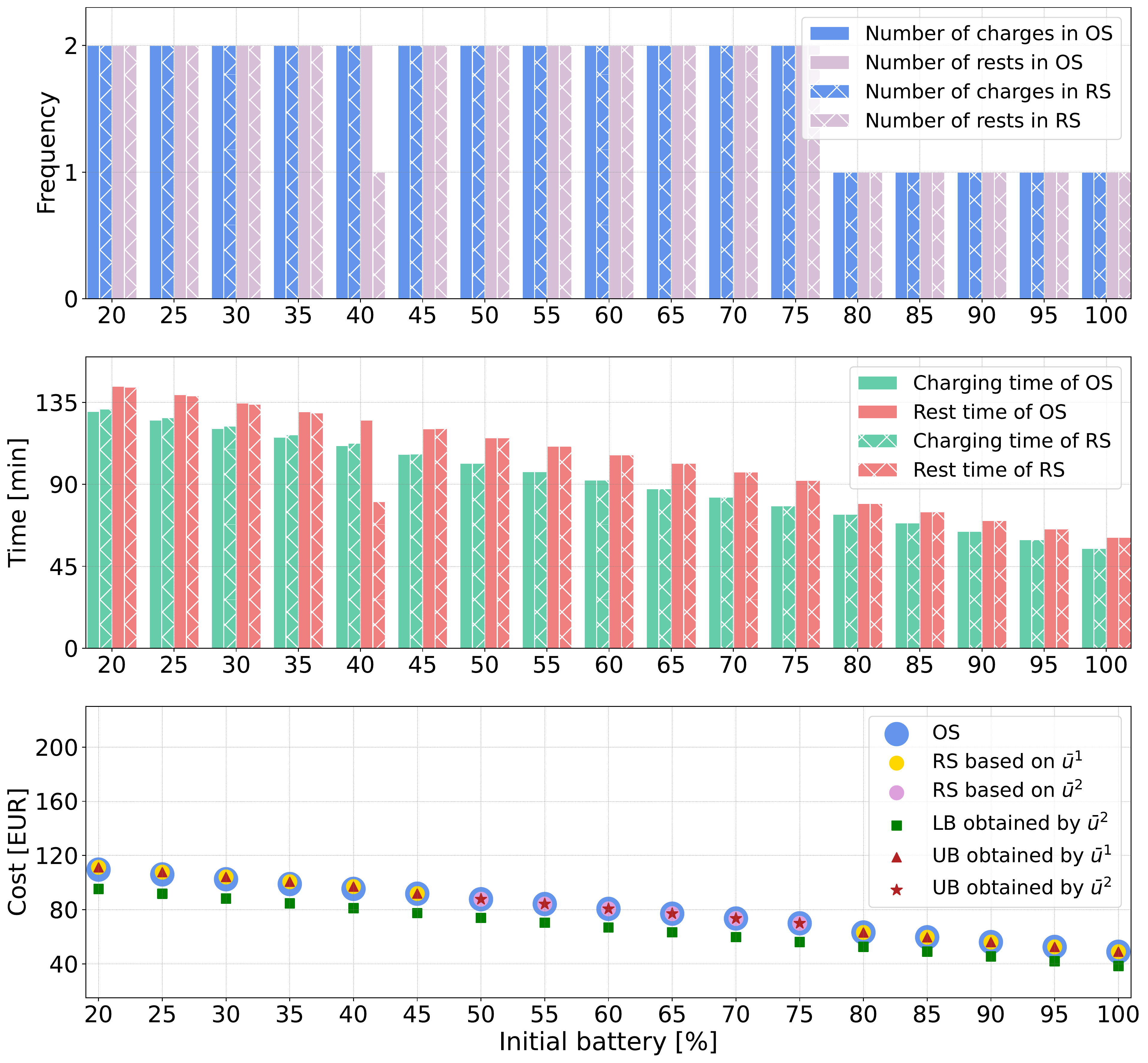}
   \vspace{-3pt}
   \caption{Comparison results in scenario 4 ($N\!=\!8$).}
   \label{Fig.7}
\end{figure}

\begin{figure}[t]
   \centering
\includegraphics[width=0.985\linewidth]{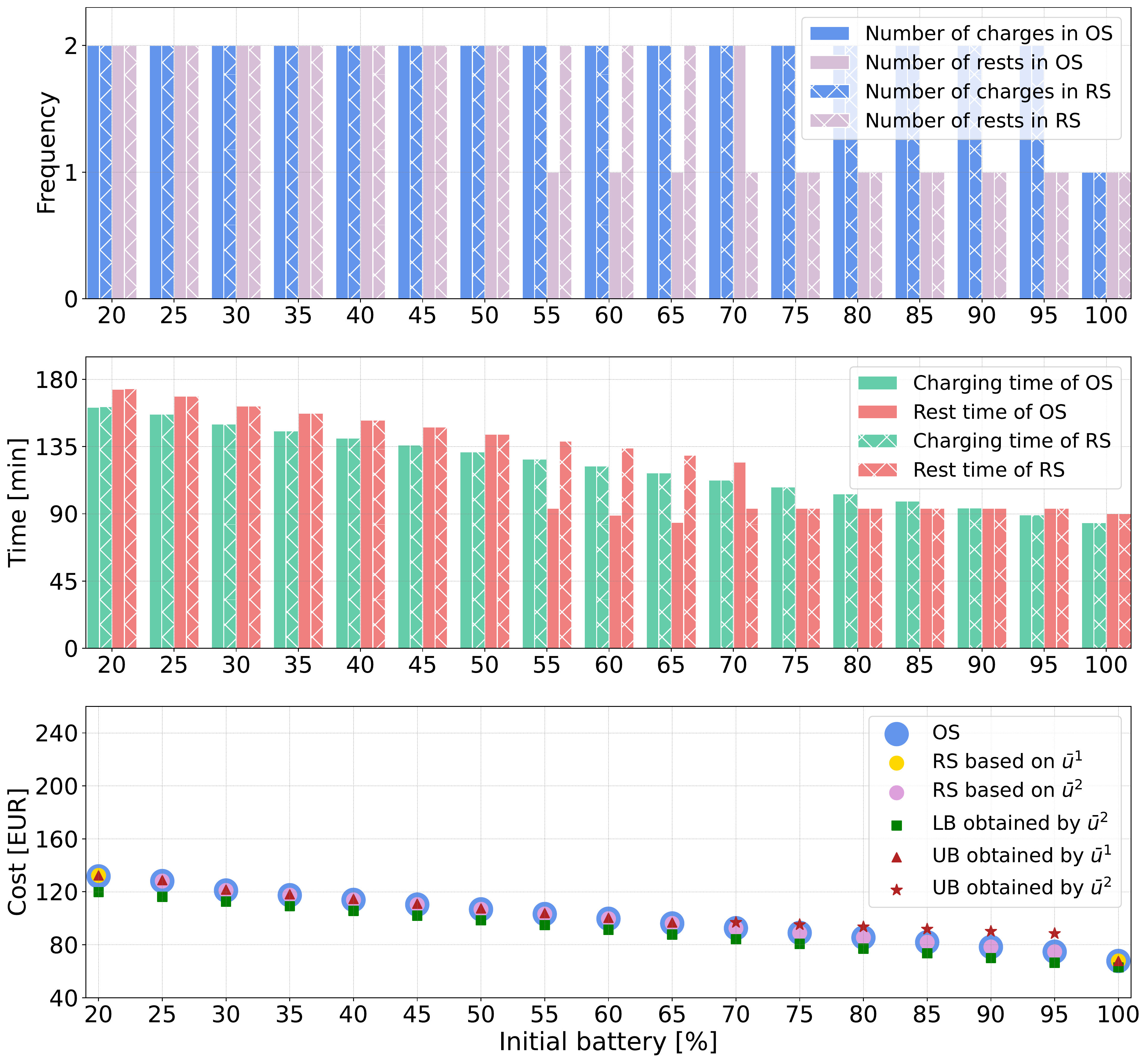}
   \vspace{-3pt}
   \caption{Comparison results in scenario 5 ($N\!=\!9$).}
   \label{Fig.8}
\end{figure}

\begin{figure}[t]
   \centering
   \includegraphics[width=0.982\linewidth]{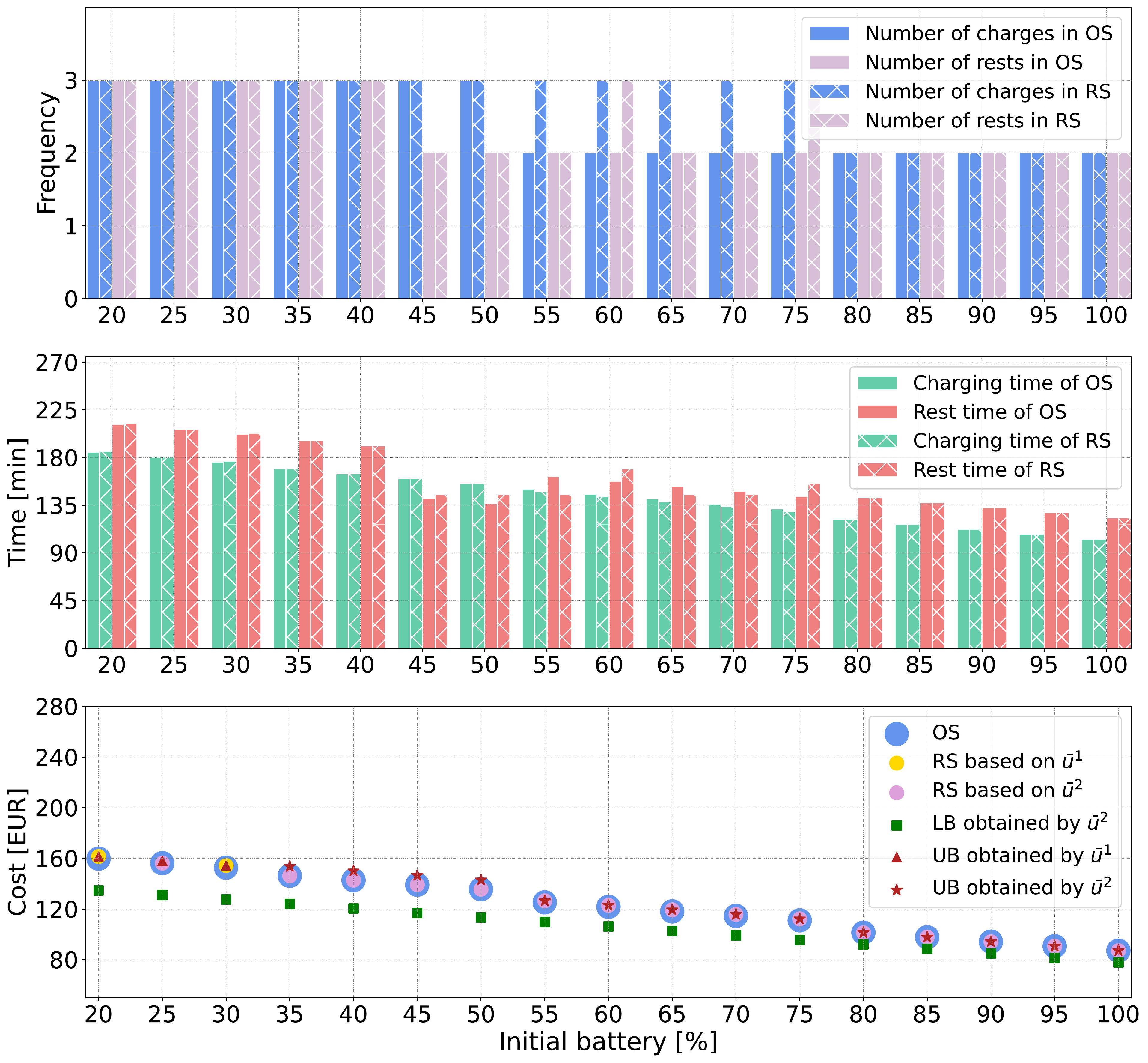}
   \vspace{-3pt}
   \caption{Comparison results in scenario 6 ($N\!=\!10$).}
   \label{Fig.9}
\end{figure}

To evaluate the rollout-based charging strategy, we conduct simulation studies for trucks in $6$ scenarios where $N$ is varied from $5$ to $10$, and in each scenario, the proportion of the initial battery is changed from $20\%$ to $100\%$, incremented with $5\%$. The optimal solution is computed by enumerating all the combinations of the binary variables, and the rollout solution is obtained by taking $\bar{u}^1$ and $\bar{u}^2$ as the base solutions. Both solutions use \textit{Gurobi} as the linear program solver. For brevity, we refer to the optimal and rollout-based solutions as OS and RS, respectively, and refer to the lower and upper bounds of the optimal cost of the rollout solution as LB and UB. The number of charges and rests, charging and rest times, as well as the total costs of the entire trip, compared between the OS and RS in each scenario, are given in Fig.~\ref{Fig.4}-Fig.~\ref{Fig.9}. The code for one sample is provided at\footnote{See \href{https://yuchaotaigu.github.io/research/CDC23.ipynb}{https://yuchaotaigu.github.io/research/CDC23.ipynb} for a sample implementation.}. Here, we note that $\Delta{T}$ is relaxed to $220$ minutes for the scenarios $N\!\!=\!9,10$ so that there exists a feasible solution for the problem. As we can see from Fig.~\ref{Fig.4}-Fig.~\ref{Fig.9}, the rollout-based charging and rest decisions are near-optimal in the majority of the cases, resulting in similar charging time, rest time, and total costs in comparison with the optimal solutions. In addition, the UBs and LBs in these figures indicate that the proposed greedy and relaxed base solutions provide good performance bounds for the rollout solution.  

\begin{table}[t]
\caption{Comparison between the OS and RS} 
\vspace{-2pt}
\centering 
\begin{tabular}{|l|c|c|c|c|c|c|} 
\hline
& & & & & &\\[-1.2ex]
  \raisebox{1.2ex}{\text{$N$}\!\!}& \raisebox{1.2ex}{\textbf{$5$}}&\raisebox{1.2ex}{$6$}&\raisebox{1.2ex}{$7$}&\raisebox{1.2ex}{$8$} &\raisebox{1.2ex}{$9$} &\raisebox{1.2ex}{$10$}
\\ [-0.9ex]
\hline 
& & & & & &\\[-0.9ex]
\raisebox{1.ex}{\text{AOG-RS [$\%$]\!\!}} & \raisebox{1.ex}{$0$} &\raisebox{1.ex}{$0.55$}&\raisebox{1.ex}{$0.72$}&\raisebox{1.ex}{\!$0.49$\!} & \raisebox{1.ex}{\!$0.03$\!} & \raisebox{1.ex}{\!$0.42$\!}\\[-0.4ex]
\hline 
& & & & & &\\[-0.9ex]
\raisebox{1.ex}{\text{AOG-UB [$\%$]}\!} & \raisebox{1.ex}{$1.04$} &\raisebox{1.ex}{$5.46$}&\raisebox{1.ex}{$2.23$}&\raisebox{1.ex}{\!$0.49$\!} & \raisebox{1.1ex}{\!$4.28$\!} & \raisebox{1.ex}{\!$1.72$\!}\\[-0.4ex]
\hline 
& & & & & &\\[-0.9ex]
\raisebox{1.ex}{\text{ACT of RS [s]}\!} & \raisebox{1.ex}{$0.34$} &\raisebox{1.ex}{$0.42$}&\raisebox{1.1ex}{$0.57$}&\raisebox{1.ex}{\!$0.65$\!} & \raisebox{1.ex}{\!$0.84$\!} & \raisebox{1.ex}{\!$1.43$\!}
\\[-0.5ex]
\hline
& & & & & &\\[-0.85ex]
\raisebox{1.ex}{\text{ACT of OS [min]}} & \raisebox{1.ex}{$0.32$} &\raisebox{1.ex}{$1.34$}&\raisebox{1.ex}{$5.45$}&\raisebox{1.ex}{\!$24.02$\!} & \raisebox{1.1ex}{\!$98.50$\!} & \raisebox{1.ex}{\!$413.68$\!}\\[-0.1ex]
\hline
\end{tabular}
\label{Table2}
\end{table}

The optimality gap between RSs and OSs, UBs and OSs, and the computational efficiency of the RS and OS methods are shown in Table~\ref{Table2}. For each $N$ with a given initial battery, the optimality gap between the RS and OS is computed by $100\!\times\!(F(\text{RS})\!-\!F(\text{OS}))/F(\text{OS})$, where $F$ is the cost function defined by \eqref{Eq.15}. Similarly, the optimality gap between the UB and OS is computed by $100\!\times\!(\text{UB}\!-\!F(\text{OS}))/F(\text{OS})$. We show in Table~\ref{Table2} the average optimality gap (AOG) of $17$ situations for each $N$ and the average computational times (ACT) to obtain the RS and OS. It can be seen from Table~\ref{Table2} that the computational demands for obtaining OSs increase exponentially with the increase in $N$. By employing the proposed RS scheme, the computational time decreases significantly, taking less than $2$ seconds, while having an average optimality gap within $1\%$, which illustrates the desirable properties of our method.

\begin{figure}[t]
     \centering
     \includegraphics[width=0.9\linewidth]{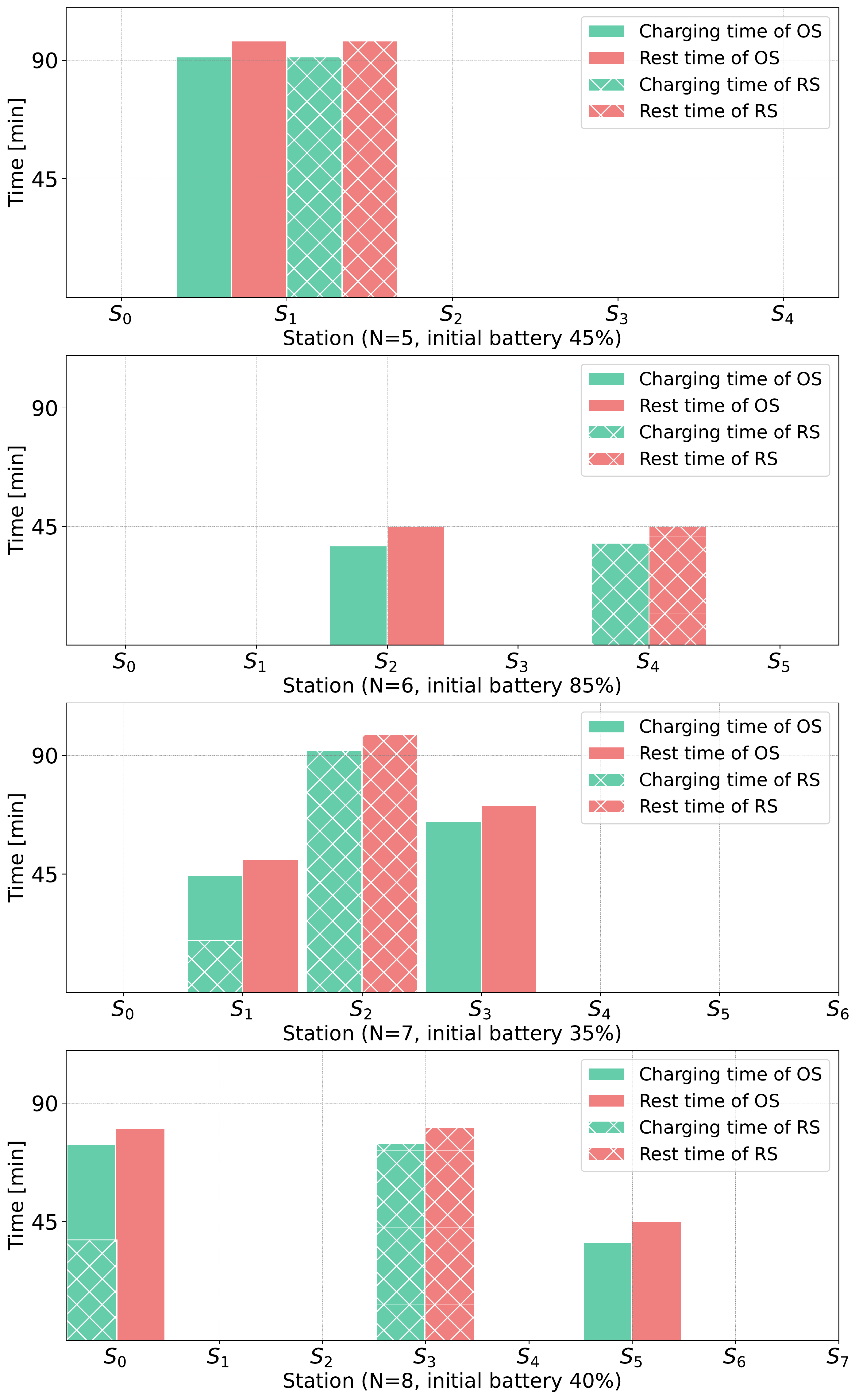}
     \vspace{-3pt}
      \caption{The charging and rest times in OSs and RSs in $4$ selected scenarios. For $N\!=\!5$ with the initial battery being $45\%$, the two solutions are identical.}
      \label{Fig.10}
   \end{figure}

\begin{figure}[t]
   \centering
   \includegraphics[width=0.95\linewidth]{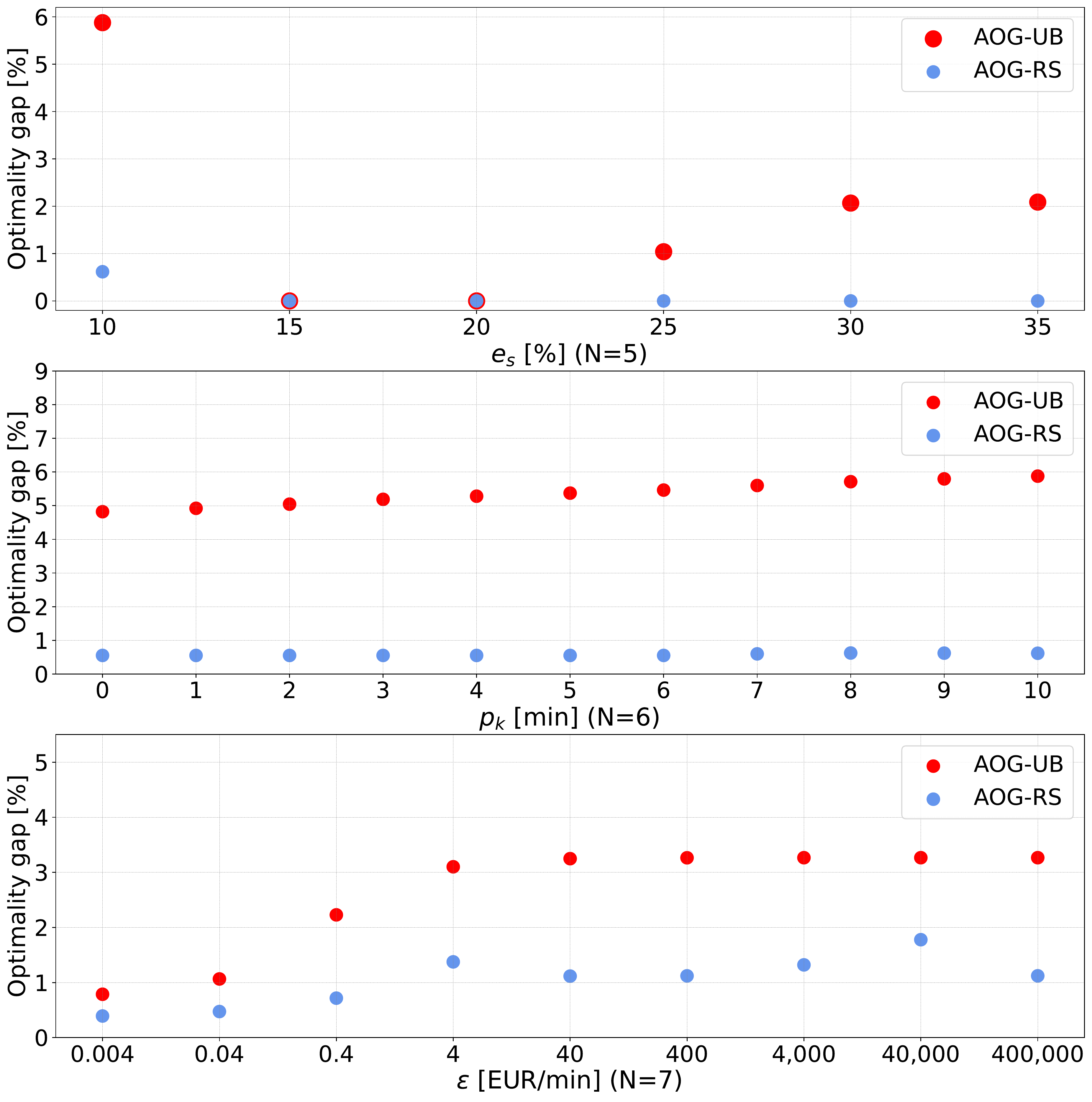}
   \vspace{-6pt}
   \caption{Sensitivity study of the rollout method to parameters $e_s$, $p_k$, and $\epsilon$. It can be seen that the performance of the rollout solution remains satisfactory and consistent.}
   \label{Fig.11}
\end{figure}
   
Fig.~\ref{Fig.10} shows the charging and rest times in OSs and RSs for the selected scenarios $N\!=\!5,6,7,8$ with the initial battery in each scenario being given in the figure. The results show that although the RS might provide trucks with different stations for charging and rest in contrast to the OS, it results in a near-optimal charging and rest time in total to fulfill the delivery mission while meeting the HoS regulations. 

Moreover, to evaluate how parameter selection affects the rollout solution performance, a parameter sensitivity study is performed, where a wide range of values are assigned to $e_s$, $p_k$, and $\epsilon$ for $N\!\!=\!5,6,7$ and still, the initial battery in the truck is varied from $20\%$ to $100\%$, incremented by $5\%$ for each parameter setting. The AOG between the UB and OS and that between the RS and OS are shown in Fig.~\ref{Fig.11}. The parameter sensitivity study shows that the proposed rollout scheme adapts to different parameter selections and remains reliable with small optimality gaps.
\vspace{5pt}

\section{Conclusion}\label{Section VI}
This letter investigated the optimal charging strategy for electric trucks, which allows freight drivers to determine where and how long to recharge trucks to complete the delivery task before deadlines while respecting the HoS regulations. We assumed that every truck has a pre-planned route with a given collection of charging and rest stations. The optimal charging problem of each truck was modeled as a mixed integer program integrated with bilinear constraints, which is computationally intractable to be solved exactly. As an approximate scheme, a rollout-based charging strategy was proposed, which provides near-optimal solutions to the problem with solid performance guarantees while reducing the computational load drastically. Compared to the existing literature, our modeling method allows for handling the HoS regulations subject to delivery deadlines. Moreover, the rollout-based solution of high efficiency is promising to be applied in real-time strategy planning to cope with travel time uncertainties. Future work could be developing optimal charging strategies for electric trucks with limited charging resources at stations.

\appendices
\appendix
In the appendices, we first give a brief introduction to the optimal control problem in Appendix~\ref{app:a}. It is within this context that the rollout scheme, as well as Newton's step interpretation of the method, is developed. Then we show in Appendix~\ref{app:b} how the mixed integer program considered here can be transformed as an equivalent optimal control problem involving only discrete variables, thus proving Prop.~\ref{prop:rollout_mixed_int}. Next, we provide additional variants of the proposed scheme and connect it to a classical method within the optimal control context in Appendices~\ref{app:c} and \ref{app:d}. The majority of the proof arguments and transformation are modified from \cite{bertsekas2005rollout} and \cite[Section~3.4]{bertsekas2020rollout}. In Appendices~\ref{app:e} and \ref{app:f}, we provide further details on our optimal control problem, including the sufficient conditions for its feasibility and a procedure through which the bilinear constraints of our problem can be linearized.

\subsection{Optimal Control Problems and Rollout}\label{app:a}
We consider optimal control problems involving dynamics 
$$x_{k+1}=f_k(x_k,u_k),\quad k=0,\dots,N\!-\!1,$$
where $x_k$ and $u_k$ are the state and control at time $k$, which belong to some sets $X_k$ and $U_k$ that contain finite elements, respectively, and $f_k$ is some function. Each control must be chosen from a finite constraint set $U_k(x_k)$ that depends on the current state $x_k$. We are interested in the policy $\pi\!=\!\{\mu_0,\mu_1,\dots,\mu_N\}$, which is a sequence of functions $\mu_k:X_k\mapsto U_k$ such that $\mu_k(x_k)\in U_k(x_k)$ for all $x_k$. The set of all the policies is denoted as $\Pi$. 

When at $x_k$ and applying $u_k\!\in\! U_k(x_k)$, there is a stage cost $g_k(x_k,u_k)\!\in\! \Re\cup\{\infty\}$. In addition, there is a terminal cost $g_N(x_N)\in \Re\cup\{\infty\}$ for being $x_N$ at $N$th stage. For a policy $\pi$, starting from $x_k$, the total cost accumulated over $N-k$ stages plus the terminal cost are denoted as $J_{k,\pi}(x_0)$, i.e.,
$$J_{k,\pi}(x_k)=g_N(x_N)+\sum_{i=k}^{N-1}g_i\big(x_i,\mu_i(x_i)\big),$$
where $x_{i+1}\!=\!f_i\big(x_i,\mu_i(x_i)\big)$ for $i\!=\!k,\dots,N\!-\!1$. For brevity, we denote $J_{0,\pi}$ as $J_\pi$. Within this context, one hopes to find the optimal cost $J^*$ such that
$$J^*(x_0)=\min_{\pi\in \Pi}J_{\pi}(x_0).$$

For many problems, computing the optimal policy is impractical. In those cases, rollout is a simple yet reliable approximate solution method. Based upon a known policy $\Bar{\pi}=\{\Bar{\mu}_0,\dots,\Bar{\mu}_{N-1}\}$, referred to as the \emph{base policy}, rollout scheme computes a new policy $\Tilde{\pi}=\{\Tilde{\mu}_0,\dots,\Tilde{\mu}_{N-1}\}$, referred to as the \emph{rollout policy}, through computations
\begin{equation}
    \label{eq:rollout_op}
    \Tilde{\mu}_k(x_k)\in\arg\min_{u_k\in U_k(x_k)}\Big\{g_k(x_k,u_k)+J_{\Bar{\pi},k+1}\big(f_k(x_k,u_k)\big)\Big\},
\end{equation}
where $x_{k+1}\!=\!f_{k}\big(x_k,\Tilde{\mu}_k(x_k)\big)$, $k\!=\!0,\dots,N\!-\!1$. The rollout policy computed above is no worse than the base policy, as is given in the following proposition, which is adopted from \cite[Prop.~3.3.1]{bertsekas2020rollout}.
\begin{proposition}\label{prop:rollout_original}
Let $\Bar{\pi}\!\in\!\Pi$ and consider $\Tilde{\pi}$ obtained via \eqref{eq:rollout_op}. Then we have that
\begin{equation}
    \label{eq:rollout_bound_general}
    J_{k,\Tilde{\pi}}(x_k)\leq J_{k,\Bar{\pi}}(x_k),
\end{equation}
for $k\!=\!0,1,\dots,N\!-\!1$ and $x_k\!\in\! X_k$.
\end{proposition}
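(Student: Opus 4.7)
The plan is to prove the inequality by backward induction on the stage index $k$, going from $k = N-1$ down to $k = 0$. The driving idea is simple: by construction, $\tilde{\mu}_k$ minimizes the one-step lookahead objective whose tail is the base policy's cost-to-go $J_{k+1,\bar{\pi}}$, so it is already competitive against $\bar{\mu}_k$ on that surrogate objective; a straightforward induction then lets us replace the surrogate tail $J_{k+1,\bar{\pi}}$ by the true rollout tail $J_{k+1,\tilde{\pi}}$ without making things worse. Note that finiteness of $X_k$ and $U_k(x_k)$ makes the argmin in \eqref{eq:rollout_op} well defined, so no regularity issues arise.

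Concretely, I would first handle the base case $k = N-1$. Because $J_{N,\pi}(x_N) = g_N(x_N)$ regardless of the policy, the cost-to-go of $\tilde{\pi}$ from $x_{N-1}$ coincides with the value of the one-stage minimization \eqref{eq:rollout_op}, and comparing this value against the feasible choice $u = \bar{\mu}_{N-1}(x_{N-1})$ gives $J_{N-1,\tilde{\pi}}(x_{N-1}) \le J_{N-1,\bar{\pi}}(x_{N-1})$ at once. For the inductive step at stage $k$, assuming the inequality holds at stage $k+1$ for every $x_{k+1} \in X_{k+1}$, I would chain three bounds: (i) unroll $J_{k,\tilde{\pi}}(x_k)$ one step into $g_k(x_k,\tilde{\mu}_k(x_k)) + J_{k+1,\tilde{\pi}}(f_k(x_k,\tilde{\mu}_k(x_k)))$; (ii) invoke the induction hypothesis at the successor $f_k(x_k,\tilde{\mu}_k(x_k))$ to dominate $J_{k+1,\tilde{\pi}}$ there by $J_{k+1,\bar{\pi}}$; and (iii) use the defining minimization \eqref{eq:rollout_op} of $\tilde{\mu}_k$, with $\bar{\mu}_k(x_k)$ as the competitor, to further dominate the result by $g_k(x_k,\bar{\mu}_k(x_k)) + J_{k+1,\bar{\pi}}(f_k(x_k,\bar{\mu}_k(x_k))) = J_{k,\bar{\pi}}(x_k)$.

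I do not expect a serious obstacle, because the argument is essentially algebraic manipulation using the definitions. The one point that deserves careful bookkeeping, and which is the usual pitfall in this kind of argument, is to apply the induction hypothesis at the correct successor state, namely $f_k(x_k,\tilde{\mu}_k(x_k))$, rather than at $f_k(x_k,\bar{\mu}_k(x_k))$; conflating these two breaks the chain, since the induction hypothesis bounds $J_{k+1,\tilde{\pi}}$ pointwise but the identity unrolling $J_{k,\tilde{\pi}}$ routes the trajectory through $\tilde{\mu}_k(x_k)$. Once this is kept straight, the three inequalities compose into \eqref{eq:rollout_bound_general} and the induction closes.
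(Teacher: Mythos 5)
Your proof is correct. The paper does not actually prove this proposition itself --- it adopts it directly from \cite[Prop.~3.3.1]{bertsekas2020rollout} --- and your backward induction is precisely the standard argument given there: the base case at $k=N-1$ follows because the one-step lookahead with tail $g_N$ coincides with $J_{N-1,\tilde{\pi}}$, and the inductive step chains the unrolling identity, the induction hypothesis evaluated at the successor $f_k\big(x_k,\tilde{\mu}_k(x_k)\big)$, and the minimization defining $\tilde{\mu}_k$ with $\bar{\mu}_k(x_k)$ as competitor. You also correctly flag the one genuine pitfall (applying the hypothesis at the rollout successor rather than the base-policy successor), and the extended-real-valued stage and terminal costs cause no difficulty since the inequalities compose in $\Re\cup\{\infty\}$.
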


In fact, rollout can be interpreted as one step of Newton's method for computing $J^*(x_0)$ with its starting point provided by the base policy. Extensive discussions on rollout that go well beyond the context considered here can be found in \cite{bertsekas2020rollout,bertsekas2022lessons}. 

\subsection{Mixed Integer Problem and Rollout}\label{app:b}
We now show that the mixed integer problem formulated in \eqref{eq:mixed_int} can be transformed into the equivalent optimal control problem discussed above, and the algorithm described in Section~\ref{Section IV} is the rollout method applied to this equivalent problem. Based upon this transformation and Prop.~\ref{prop:rollout_original}, Prop.~\ref{prop:rollout_mixed_int} can be proved.

\begin{proof}[Proof of Prop.~\ref{prop:rollout_mixed_int}]
We define a fictitious state `null' as the initial state $x_0$ so that $X_0$ is the singleton $\{\text{`null'}\}$. The set $X_1$ is defined as $U_0$, and for $k\!=\!1,\dots,N\!-\!1$, the sets $X_{k+1}$ are defined recursively as $X_{k+1}\!=\!X_k\!\times\! U_k$. As a result, the set $X_N$ containing all terminal states equals $U$. The control constraint sets are independent of states, i.e., $U_k(x_k)=U_k$ for all $x_k\in X_k$ and $k$.

Given the current state $x_k$ and the control $u_k\in U_k(x_k)$, the dynamics $f_k$ takes the form
$$x_{k+1}=(x_k,u_k),\quad k=0,1,\dots,N\!-\!1;$$
namely concatenating the control $u_k$ to the current state $x_k$. The stage costs associated with all the state-control pair $(x_k,u_k)$ are $g_k(x_k,u_k)\equiv 0$, while the terminal cost is
$$g_N(x_N)=\min_{(x_N,v)\in \overline{C}}G(x_N,v).$$
In particular, $g_N(x_N)\!=\!\infty$ for $x_N\!\not\in\!C$. Therefore, given some $u\!=\!(u_0,\dots,u_{N-1})\!\in\!U$, we can implicitly define a policy $\pi\!=\!\{\mu_0,\dots,\mu_{N-1}\}$ so that $\mu_k(x_k)\equiv u_k$ and
$$J_\pi(x_0)=\min_{(u,v)\in \overline{C}}G(u,v).$$

With the equivalent optimal control problem in mind, we can see that the sequence of minimization \eqref{eq:rollout_mixed_int_0}-\eqref{eq:rollout_mixed_int_end} is the computations \eqref{eq:rollout_op} for $k\!=\!0,\dots,N\!-\!1$, and the bound \eqref{eq:rollout_bound} in Prop.~\ref{prop:rollout_mixed_int} is equivalent to \eqref{eq:rollout_bound_general}. In particular, $\Bar{u}\!\in\! C$ implies that $J_{\Bar{\pi}}(x_0)\!<\!\infty$. As a result, $J_{\Tilde{\pi}}(x_0)\!\leq\! J_{\Bar{\pi}}(x_0)$, which means that $\Tilde{u}\in C$.
\end{proof}

Based upon the transformation introduced above, the other variant of the proposed scheme discussed in Section~\ref{Section IV} can be interpreted accordingly within the context of the equivalent optimal control problem. 

\subsection{Variant Based on On-Line Policy Iteration}\label{app:c}
The rollout scheme can be repeated to further enhance the performance. In particular, given a policy $\Tilde{\pi}^0$, we can obtain a new policy $\Tilde{\pi}^1\!=\!\{\Tilde{\mu}_0^1,\dots,\Tilde{\mu}_{N-1}^1\}$ through computations similar to \eqref{eq:rollout_op} with $\Tilde{\pi}^0$ and $\Tilde{\mu}_k^1$ in place of $\Bar{\pi}$ and $\Tilde{\mu}_k$. After obtaining $\Tilde{\pi}^i$, we may proceed to compute $\Tilde{\pi}^{i+1}$ in a similar manner. This scheme can be considered as the on-line policy iteration algorithm \cite{bertsekas2021line} adapted to the optimal control problem of concern. If computational resource permits, the policies obtained would converge in the sense that for some finite $\Bar{k}$, starting from the same $x_0$, the trajectories generated under $\Tilde{\pi}^{\Bar{k}}$
and $\Tilde{\pi}^{\Bar{k}+1}$ are identical. Note that the obtained policy $\Tilde{\pi}^{\Bar{k}}$ upon convergence need not be optimal. Instead, it is optimal for a modified problem; see \cite[Definition~2.1]{bertsekas2021line}.

For the mixed integer program \eqref{eq:mixed_int} considered here, given a base solution $\Tilde{u}^0$, we may compute its corresponding rollout solution $\Tilde{u}^1$ through the sequence of minimization \eqref{eq:rollout_mixed_int_0}-\eqref{eq:rollout_mixed_int_end} with $\Tilde{u}^0$ and $\Tilde{u}^1$ in place of $\Bar{u}$ and $\Tilde{u}$, which is equivalent to the rollout scheme applied to its equivalent optimal control problem. Then the repeated application of rollout also applies. In particular, treating the current solution $\Tilde{u}^i$ as the base solution, we can obtain the corresponding rollout solution $\Tilde{u}^{i+1}$ through similar computations.  

\subsection{Additional Variants of the Rollout Scheme}\label{app:d}

First, from the description of the scheme, it is clear that we can change the order in which the elements of $u$ are optimized. The validity of our scheme, as well as the corresponding performance guarantees stated in Prop.~\ref{prop:rollout_mixed_int} remains intact. For example, we can reverse the order and start by computing $\Tilde{u}_{N-1}$ with other elements fixed at $\Bar{u}_k$, $k=0,\dots,N-2$, and proceed backward.

Moreover, assume that after obtaining $\Tilde{u}_{k-1}$, the computational budget runs out. Then the tentative best solution $(\Tilde{u}_0,\dots,\Tilde{u}_{k-1},\Bar{u}_k,\dots,\Bar{u}_{N-1})$, which we denote as $\hat{u}^{k-1}$, is feasible in the sense that $\hat{u}^{k-1}\in U$. In addition, for the corresponding optimizer denoted as $\hat{v}^{k-1}$, we have the performance bound
$$G(\Tilde{u},\Tilde{v})\leq G(\hat{u}^{k-1},\hat{v}^{k-1})\leq \min_{(\Bar{u},v)\in \overline{C}}G(\Bar{u},v).$$
Therefore, our scheme has the character of an anytime algorithm. However, in the case where $k\!<\!N$, Newton's step interpretation described in Section~\ref{Section I} is not valid anymore.

\subsection{Sufficient Conditions for the Feasibility of Charging Problem}\label{app:e}
We provide a set of conditions under which a feasible solution can be obtained analytically. Some of the conditions discussed here are restrictive. Still, they may be taken as a starting point for the construction of the base solution.

For the battery parameters, it is natural to assume that 
\begin{align*}
e_{\text{ini}}\geq& e_s+\Bar{P}(\tau_0+d_0),\\
e_f\geq& e_s+\Bar{P}(d_{k-1}+\tau_k+d_k),~\;k=1,\dots,N\!-\!1,\\
e_f\geq& e_s+\Bar{P}\tau_N,
\end{align*}
which means that the initial energy suffices for the trip to the first station, and the fully charged battery can cover the trip connecting two stations.

Similarly, we may expect that the duration of the trips connecting two stations is less than $T_d$, i.e.,
\begin{align*}
\tau_0+d_0\leq& T_d,\\
d_{k-1}+\tau_k+d_k\leq& T_d,~\;k=1,\dots,N\!-\!1,\\
d_{N-1}+\tau_N\leq& T_d.
\end{align*}
When there are limited numbers of $S_k$, the total driving time involving all detours can be no more than $\Bar{T}_d$, i.e.,
\begin{equation}
    \label{eq:sufficient_Td_constraint}
    \sum_{k=0}^{N}\tau_k+\sum_{k=0}^{N-1}2d_k\leq{\bar{T}_d}.
\end{equation}
Regarding the delivery deadline, we may require that
\begin{equation}
    \label{eq:sufficient_deadline}
    \sum_{k=0}^{N-1}\max\Big\{b_k\big(2d_k+p_k+\Bar{t}_k\big), \tilde{b}_k\big(2d_k+T_r\big)\Big\}\leq \Delta T,
\end{equation}
where 
$$\Bar{t}_k=\frac{e_f-e_s}{\min\big\{P_k,P_{\max}\big\}},$$
which is the maximum charging time needed at each station. 

When the above inequalities hold, we obtain a feasible base solution by setting $b_k\!=\!\Tilde{b}_k\!=\!1$ for all $k$. Note that conditions \eqref{eq:sufficient_Td_constraint} and \eqref{eq:sufficient_deadline} are likely to be restrictive, as they imply that constraints \eqref{eq:total_Td_constraint} and \eqref{eq:deadline} are never active. Still, the other conditions stated here may be used for the construction of a feasible base solution.

\subsection{Linearization of Bilinear Constraints}\label{app:f}
The bilinear terms $b_k\Delta e_k$, $k\!=\!0,\dots,N\!-\!1$ appear in the battery dynamics \eqref{eq:battery_energy}. They can be linearized by introducing additional variables $\Delta \hat{e}_k$ with constraints
\begin{align*}
    0\leq&\Delta \hat{e}_k\leq b_k\overline{\delta},\\
    0\leq& \Delta e_k-\Delta\hat{e}_k \leq \overline{\delta}(1-b_k),
\end{align*}
where $\overline{\delta}$ is a large positive constant used to approximate the unboundedness above, as in \eqref{eq:battery_energy}. The bilinear terms $b_kt_k$ appeared in the HoS regulation constraint \eqref{eq:Tr_constraint} and the deadline constraint \eqref{eq:deadline}, as well as the cost \eqref{Eq.14} can be linearized via an identical procedure. See, e.g., \cite[p.~176]{williams2013model} for further discussions.

For the product terms $\Tilde{b}_kb_k$ appeared in \eqref{Eq.2b}, we can introduce binary variables $\hat{b}_k\!\in\!\{0,1\}$ with constraints
$$\hat{b}_k\leq \Tilde{b}_k,\;\hat{b}_k\leq b_k,\;\hat{b}_k\geq \Tilde{b}_k+b_k-1,\quad k\!=\!0,\dots,N\!-\!1.$$
As for the constraints $\bar{b}_k\!=\!b_k\!\lor\!\tilde{b}_k$, they can be described in linear forms as
$$\bar{b}_k\geq \Tilde{b}_k,\;\bar{b}_k\geq b_k,\;\bar{b}_k\leq \Tilde{b}_k+b_k,\quad k\!=\!0,\dots,N\!-\!1.$$

With the above transformation, the mixed integer optimal charging problem becomes one with linear constraints and costs. However, even for the mixed integer linear program, the iterations needed for the exact solution may still grow exponentially with the problem scale. Moreover, if the linear approximation \eqref{eq:charging} for charging process $\Delta e_k$ is replaced by more accurate functions that are nonlinear in $t_k$, such a transformation does not lead to any simplifications. 
\bibliographystyle{IEEEtran}
\bibliography{LCSS_2023}
\end{document}